\documentclass[11pt]{article}
\usepackage[margin=1in]{geometry}
\usepackage{amsmath,amsthm,amssymb}
\usepackage{graphicx,float,wrapfig}
\usepackage[dvipsnames]{xcolor}
\usepackage[ruled]{algorithm2e} 
\usepackage{hyperref}
\usepackage[nameinlink,capitalise]{cleveref}
\usepackage{tikz}
\usepackage{changepage}
\usepackage{mathtools}
\usepackage{enumerate}
\usepackage{thmtools,thm-restate}
\usepackage{libertine}
\usepackage{pgfplots}
\pgfplotsset{compat=1.18}

\makeatletter
\def\@fnsymbol#1{\ensuremath{\ifcase#1\or \dagger\or \ddagger\or
   \mathsection\or \mathparagraph\or \|\or **\or \dagger\dagger
   \or \ddagger\ddagger \else\@ctrerr\fi}}
\makeatother
\newcommand*\samethanks[1][\value{footnote}]{\footnotemark[#1]}

\newcommand{\pr}{\mathbf{Pr}} 
\newcommand{\one}{\mathbf{1}}%
\newcommand{\E}{\mathbb{E}}
\newcommand{\I}{\mathcal{I}}
\newcommand{\D}{\mathcal{D}}

\newcommand{\alg}{\mathsf{Alg}}
\newcommand{\reg}{\mathsf{regret}}
\newcommand{\rstar}{\widetilde{R}}
\newcommand{\ub}{R_q(\vec x)}
\renewcommand{\d}{\,\mathrm{d}}

\DeclareMathOperator*{\argmin}{arg\,min}

\Crefname{algocf}{Algorithm}{Algorithms}
\newtheorem{theorem}{Theorem}[section]
\newtheorem{observation}[theorem]{Observation}
\newtheorem{lemma}[theorem]{Lemma}

\newtheorem{proposition}[theorem]{Proposition}
\theoremstyle{definition}
\newtheorem{definition}[theorem]{Definition}
\newtheorem{example}[theorem]{Example}

\definecolor{myblue}{rgb}{0.15, 0.1, 0.95}
\definecolor{mygreen}{rgb}{0.15, 0.65, 0.25}
\definecolor{myred}{rgb}{0.75, 0.25, 0.15}

\definecolor{linkc}{rgb}{0.4, 0.3, 0.7}
\definecolor{citec}{rgb}{0.3, 0.6, 0.4}
\definecolor{urlc}{rgb}{0.3, 0.1, 0.2}
\hypersetup{
    linktocpage=true,
    breaklinks=true,
    colorlinks=true,
    linkcolor=linkc,
    citecolor=citec,
    urlcolor=urlc
}

\title{Additively Competitive Secretaries}

\author{Mohammad Mahdian\thanks{Google Research. Email: \texttt{\{mahdian,maojm\}@google.com}.} \hspace{-0.7em} \and Jieming Mao\samethanks[1] \hspace{-0.7em} \and Enze Sun\thanks{Department of Computer Science, the University of Hong Kong. Email: \texttt{sunenze@connect.hku.hk}.} \hspace{-0.7em} \and Kangning Wang\thanks{Department of Computer Science, Rutgers University. Email: \texttt{kn.w@rutgers.edu}.} \hspace{-0.7em} \and Yifan Wang\thanks{School of Computer Science, Georgia Institute of Technology. Email: \texttt{ywang3782@gatech.edu}.}}
\date{}

\begin{document}

\pagenumbering{gobble}
\thispagestyle{empty}
\maketitle
\begin{abstract}
In the secretary problem, a set of secretary candidates arrive in a uniformly random order and reveal their values one by one. A company, who can only hire one candidate and hopes to maximize the expected value of its hire, needs to make irrevocable online decisions about whether to hire the current candidate. The classical framework of evaluating a policy is to compute its worst-case competitive ratio against the optimal solution in hindsight, and there the best policy -- the ``$1/e$ law'' -- has a competitive ratio of $1/e$.

We propose an alternative evaluation framework through the lens of \emph{regret} -- the worst-case additive difference between the optimal hindsight solution and the expected performance of the policy, assuming that each value is normalized between $0$ and $1$. The $1/e$ law for the classical framework has a regret of $1 - 1/e \approx 0.632$; by contrast, we show that the class of ``pricing curves'' algorithms can guarantee a regret of at most $1/4 = 0.25$ (which is tight within the class), and the class of ``best-only pricing curves'' algorithms can guarantee a regret of at most $0.190$ (with a lower bound of $0.171$). In addition, we show that in general, no policy can give a regret guarantee better than $0.152$. Finally, we discuss other objectives in our regret-minimization framework, such as selecting the top-$k$ candidates for $k > 1$, or maximizing revenue during the selection process.
\end{abstract}

\medskip

\pagenumbering{arabic}

\section{Introduction}
In the domain of online decision making, people constantly face irrevocable decisions amid an uncertain future, where information about that uncertain future unfolds gradually. Central to this area is the theory of optimal stopping -- selecting one time to stop in order to maximize the associated reward. The \emph{prophet inequality} and the \emph{secretary problem} are two of the most prominent models of optimal stopping, each spawning numerous variants and fostering dedicated research in recent years.

In both the prophet inequality and the secretary problem, $n$ values are observed sequentially by a decision maker, who needs to select one value by making irrevocable decisions on whether to take the current value or not. In the prophet inequality, the values are drawn from independent prior distributions, which are known upfront to the decision maker. In the secretary problem -- a metaphor for a company trying to hire the best secretary -- the values are instead unknown and adversarial, but their arrival order is uniformly random. Traditionally, in both frameworks, the quality of a stopping rule (i.e., an algorithm or a policy) is measured through the \emph{competitive ratio} -- the worst-case ratio between the (expected) performance of the algorithm and the (expected) optimum in hindsight. The competitive ratio of the optimal algorithm in the prophet inequality is $1/2$, and that in the secretary problem is $1/e$.

In this work, we depart from the traditional competitive-ratio approach, and examine a different evaluation framework through the notion of \emph{regret}. Concretely, we assume that every value is within a certain known range, without loss of generality normalized to $[0, 1]$, and naturally define the regret as the worst-case \emph{additive difference} between the hindsight optimum and the performance of the algorithm. 

In fact, this framework of evaluating optimal stopping problems via regret has been proposed decades ago by Hill and Kertz, where they showed that the optimal regret is $1/4 = 0.25$ for the prophet inequality \cite{HK-Jour81}, and further examined various related settings \cite{HK-Jour82,HK-Jour83,Hil-Jour83}. However, to the best of our knowledge, there has not been any work in the literature that addresses the secretary problem in this additive-evaluation framework. Consequently, we explore the regret minimization objective for the secretary (random order) setting.

Why is this additive-evaluation framework valuable, especially in light of the extensive research on the competitive ratio of the secretary problem? We start by noting that our choice of studying this framework is influenced by its simplicity and elegance. We believe that our framework has the potential to inspire the development of novel algorithms, which, in turn, may offer practical utility or provide insights in real-world scenarios. Indeed, in many potential application domains, the (rough) knowledge about the range of possible values is available. The standard algorithm for the secretary problem -- the ``$1/e$ law'' -- has a regret of $1 - 1/e \approx 0.632$ in our framework,\footnote{The regret of the ``$1/e$ law'' is at most $1 - 1/e$ because its expected performance is at least $1/e$ times the optimum, and the optimum is at most $1$. Its regret is close to $1 - 1/e$ when there is a single value $1$ and many values close to $0$. {Furthermore, this hard instance applies to all ``$\alpha$-law'' algorithms with $\alpha \in [0, 1]$, indicating that $1 - 1/e$ is the best possible regret for the class of algorithms that first wait some fraction of time and, then, accept the first value above the maximum in the waiting phase.}} while our novel algorithms are significantly better, with regret guarantees as low as $0.190$. Additionally, in many situations, especially when the values correspond to money (which is inherently additive), minimizing the additive regret makes more sense than maximizing the competitive ratio. Regret as a benchmark is also robust when making multiple independent decisions, where the sum of regrets is still a meaningful measurement. We believe that our work serves as a starting point for a future line of research, on re-examining classical online decision models from this different perspective of regret minimization.


\subsection{Our Contribution}
\label{sec:our_contribution}

Our first contribution is to introduce the natural regret minimization framework for the secretary problem, formally in \cref{sec:prelim}. Then we begin our analysis by examining the regret of \emph{pricing curves}. A pricing curve is an algorithm that specifies a threshold which only depends on the arrival time of the current value, and the algorithm accepts a value if it is above the threshold. In \cref{sec:warmup}, we show that setting $f(t) = 1 - t$ to be the threshold function of a pricing curve gives a $1/4 = 0.25$ regret bound. This implies that the \emph{random-order arrival} in the secretary problem makes the regret minimization problem easier than adversarial-order arrival \cite{HK-Jour83} (discussed in \cref{sec:related}). Our proof hinges on the observation that the ``hardest'' instance for such an algorithm only contains two different values, which allows us to compute the regret directly. We also show a matching lower bound of $0.25$ for the class of pricing curves, meaning that we have to look beyond it to further improve the regret.

Next in \cref{sec:upper}, as our main result, we identify another class of simple algorithms that have good performance and simultaneously are    to analysis. We call them the \emph{best-only pricing curves}. Such an algorithm again specifies a threshold that only depends on the arrival time of the current value, and then it accepts a value if both (1) it is above the threshold, and (2) it is the highest value seen so far. We give an upper bound of $0.190$ and a lower bound of $0.171$ for the optimal best-only pricing curve. The key for these bounds is a series of observations that lead to relaxed expressions of regret with only a small number of variables. 

\definecolor{color1}{rgb}{0.3, 0.3, 0.8}
\definecolor{color2}{rgb}{0.8, 0.3, 0.3}
\definecolor{color3}{rgb}{0.9, 0.6, 0.1}

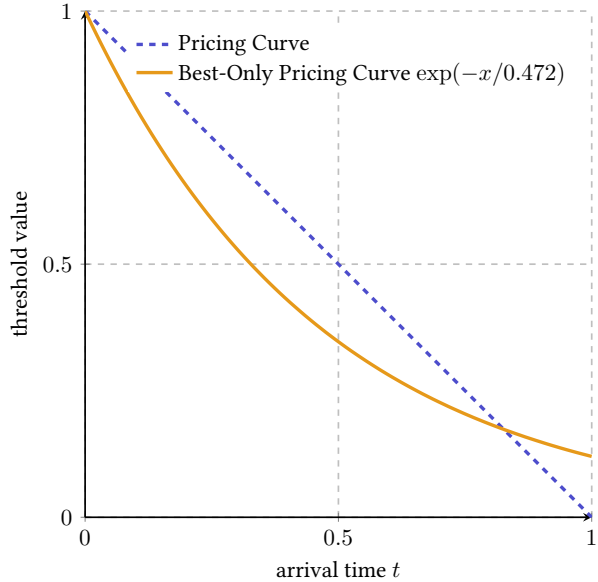
\begin{figure}[tbh]
\label{fig:curve}
\centering
\begin{tikzpicture}[scale = 0.8,
declare function={
func(\x) = (0<=\x) * (\x<1) * exp(\x/-0.5562);
}]
\begin{axis}[
    axis lines=left,
    xlabel={arrival time $t$},
    ylabel={threshold value},
    xmin=0, xmax=1,
    ymin=0, ymax=1,
    xtick={0, 0.5, 1},
    ytick={0, 0.5, 1},
    legend pos=north east,
    legend cell align={left},
    legend style={draw=none},
    ymajorgrids=true,
    xmajorgrids=true,
    grid style={dashed, thick},
    axis line style ={thick},
    width=10cm,
    height=10cm,
    samples=100
]

\addplot[color1, ultra thick, dashed, domain=0:1] {1-x};
\addlegendentry{Pricing Curve}


\addplot[color3, ultra thick, domain=0:1] {exp(\x/ -0.472)};
\addlegendentry{Best-Only Pricing Curve $\exp(-x/0.472)$}
\end{axis}
\end{tikzpicture}
\caption{Optimal pricing curves for both algorithms}
\end{figure}

Let us look further into a best-only pricing curve, in order to intuitively understand why it can provide improvement over pricing curves. In \Cref{fig:curve}, we illustrate the threshold function of the optimal pricing curve (blue dashed) and that of the best-only pricing curve with regret at most $0.190$ (yellow solid). Firstly, note that the latter has a lower threshold function, which enables it to pick a sufficiently large value in cases where all values are small. Secondly, the best-only pricing curve uses the previously-seen maximum value as a second threshold, allowing the algorithm to ``postpone'' the time of choosing a frequently appearing value and hence giving the algorithm more chances to accept a higher value.

As a concrete example, consider an instance with a single value of $1$ and near-infinitely many values of $0.5$ with small noises to break ties. The pricing curve algorithm will accept the value of $1$ if it arrives before $t = 0.5$, and otherwise will take a value of $0.5$. This leads to its relatively large regret of $0.25$. For the best-only pricing curve, although the threshold function drops below $0.5$ at time $t \approx 0.3$, the early-arriving values of $0.5$ with more positive noises act as a second threshold and prevent the algorithm from choosing a value of $0.5$ with a less positive noise. As a result, with more than $66\%$ chance, the algorithm skips all values of $0.5$ and chooses the value of $1$, improving the regret to less than $0.17$.

To complement our positive results, in \cref{sec:lower}, we provide a general lower bound of $0.152$ that holds for any algorithm. The main idea of proving the lower bound is to show that the optimal algorithm has a relatively simple structure, albeit possibly with a large number of parameters. Then, we construct a hard instance with a manageably small size, compute the optimal (exponential time) algorithm for the hard instance, and show that the regret is at least $0.152$. 

Finally, in \cref{sec:revenue}, we discuss the multiple-choice secretary problem and also the related objective of \emph{revenue maximization} in the regret minimization framework. In the multiple-choice secretary problem, a company can hire $k$ secretaries and obtain the sum of their values. We show that the bound of the optimal regret in this setting is $\Theta(\sqrt{k})$ as $k \to \infty$. In the revenue maximization problem, a seller is selling an item to $n$ sequential buyers, where buyer $i$ has valuation $x_i \in [0, 1]$ for the item. The seller needs to post a price of $p_i$ before buyer $i$ arrives. The first buyer with $x_i \geq p_i$ takes the item, and contributes $p_i$ to the revenue. The goal of the seller is again to minimize the regret -- the difference between the highest possible revenue, which is the maximum of all $x_i$'s, and the revenue achieved by the algorithm. For this objective, we show that the optimal regret is tight at $1/e$ for many variants of the model: the buyers' valuations can be either adversarial, stochastic, or i.i.d.\@ stochastic, and the arrival order can be adversarial or uniformly random.

\subsection{Further Related Works}
\label{sec:related}

\paragraph{Regret Minimization for Optimal Stopping Problems.}
The objective of minimizing regret in optimal stopping has a long history, beginning with the series of works by Hill and Kertz \cite{HK-Jour81,HK-Jour82,HK-Jour83,Hil-Jour83}, who analyzed the standard, i.i.d., correlated, and order-selection versions of the prophet inequality.
In the correlated case, \cite{HK-Jour83} proved that when values are normalized to $[0,1]$, the optimal regret of any stopping rule is $1/e$.
For independent distributions, \cite{HK-Jour81} established a tight regret bound of $1/4$, while in the i.i.d.\ setting, the optimal bound improves to approximately $0.111$ \cite{HK-Jour82}.
These lower bounds naturally extend to our framework as special cases, although our analysis yields a stronger lower bound of $0.152$, demonstrating a strictly harder setting.

Another line of research focuses on the \emph{multisecretary problem}, initiated by Arlotto and Gurvich \cite{ArlottoGurvich2019}. 
In this model, a sequence of $T$ i.i.d.\ values drawn from a fixed distribution (independent of $T$) arrives online, and the decision maker may select up to $k$ of them, often with $k$ proportional to $T$. 
The objective is to minimize the \emph{regret}, defined as the difference between the expected offline optimum and the algorithm’s expected reward, in the limit as $T \to \infty$. 
There are a number of follow-up papers exploring this direction, including \cite{BesbesKanoriaKumar2022EC,Bray2024,VeraBanerjeeGurvich2021,BumpensantiWang2020}, which study extensions to multi-type settings, continuous distributions, and other online allocation frameworks in operations research.

\paragraph{The Secretary Problem and Other Random-Order Models.}
The classical secretary problem dates back to early works such as \cite{Gar60,lindley1961dynamic,Dyn-63,ferguson1989solved}.
A wide range of variants have been explored, including minimizing relative rank \cite{chow1964optimal}, robust mixtures of random and adversarial orders \cite{DBLP:conf/innovations/BradacG0Z20}, and models with advice or samples \cite{DBLP:conf/soda/KaplanNR20,DBLP:conf/soda/CorreaCFOT21}.
The model has also been extended to combinatorial domains.
Examples include the matroid secretary problem \cite{DBLP:conf/soda/BabaioffIK07,DBLP:journals/jacm/BabaioffIKK18,DBLP:conf/soda/ChakrabortyL12,DBLP:conf/focs/Lachish14,DBLP:journals/mor/FeldmanSZ18}, and its constant-competitive versions for special matroid classes \cite{DBLP:journals/algorithmica/DimitrovP12,DBLP:conf/soda/ImW11}.
Other generalizations consider bipartite and general matchings \cite{DBLP:conf/icalp/KorulaP09,DBLP:conf/esa/KesselheimRTV13,DBLP:conf/sigecom/EzraFGT22}, downward-closed systems \cite{DBLP:conf/stoc/Rubinstein16}, and submodular objectives \cite{DBLP:journals/talg/BateniHZ13}.
More recently, the \emph{prophet secretary} model—combining prophet inequalities with random arrival—has attracted significant attention \cite{DBLP:journals/siamdm/EsfandiariHLM17}.
Improved competitive ratios have been obtained in successive works \cite{DBLP:conf/sigecom/AzarCK18,DBLP:journals/mp/CorreaSZ21,harb2023fishing}, while hardness bounds are given in \cite{giambartolomei2023prophet,DBLP:journals/mp/CorreaSZ21,DBLP:conf/sigecom/BubnaC23}.
Extensions to matroids and combinatorial auctions have also been studied \cite{DBLP:conf/soda/EhsaniHKS18}.

\paragraph{Other Optimal Stopping Problems.}
The \emph{prophet inequality}, originating from classical works \cite{Krengel-Journal77,Krengel-Journal78,Samuel-Annals84,HK-Jour81}—is a cornerstone in optimal stopping and mechanism design, with numerous extensions to matroids, matchings, and other combinatorial domains \cite{DBLP:conf/stoc/ChawlaHMS10,DBLP:journals/geb/KleinbergW19,DBLP:conf/soda/FeldmanGL15,DBLP:conf/soda/RubinsteinS17,DBLP:conf/sigecom/EzraFGT20}.
A closely related line of research studies \emph{revenue maximization} through sequential posted pricing (SPP), where each arriving buyer faces a take-it-or-leave-it price.
Variants of SPP achieve constant-factor approximations to Myerson’s optimal mechanism for single-item sales \cite{HartlineBook,DBLP:conf/stoc/ChawlaHMS10,DBLP:conf/soda/Yan11,DBLP:journals/siamcomp/Alaei14} and extend to multi-item settings with unit-demand buyers \cite{DBLP:conf/stoc/ChawlaHMS10,DBLP:journals/geb/KleinbergW19,DBLP:journals/geb/ChawlaMS15}.


\section{Preliminaries}
\label{sec:prelim}

Throughout our work, we focus on the following definition of the secretary problem. An adversary specifies $n$ values $x_1, x_2, \ldots, x_n$ with $1 \geq x_1 \geq x_2 \geq \dots \geq x_n \geq 0$. Each value $x_i$ arrives at time $t_i$ drawn from $U[0, 1]$, the uniform distribution supported on the interval $[0, 1]$. Note that $t_i$'s are almost surely distinct. An online algorithm observes each value in the increasing order of their arrival time, and needs to immediately decide whether to pick the observed value or to discard it. An algorithm can only pick one value, and for consistency, we assume that an algorithm always picks a value of $0$ at the end if it has not picked any value yet. (Alternatively, we can always modify an algorithm so that it never skips the last value.)

Let $\alg(\vec{x})$ be the expectation of the picked value of the algorithm $\alg$ when the values are $\vec{x}$. Our objective is to minimize the worst-case \emph{regret}, which is defined as
\[
\reg \coloneqq \sup_{\vec{x}} \reg(\vec{x}), \quad \text{where} \quad \reg(\vec{x}) \coloneqq x_1 - \E[\alg(\vec{x})].
\]

We note that values arriving at random time drawn from $U[0, 1]$ is equivalent to values arriving in a uniformly random order, since the time vector beyond the order does not provide any useful information. An algorithm that uses this information in its description can simply freshly draw the vector of arrival time from $(U[0, 1])^n$. We note that the step to simplify the analysis has been done in related work, such as that of \cite{DBLP:journals/mp/CorreaSZ21,DBLP:conf/soda/CorreaCFOT21}.

\section{Warm-up: The Regret of Pricing Curves}
\label{sec:warmup}
In this section, we discuss \emph{pricing curves} -- a class of simple algorithms that, as we will show, can achieve a regret bound of $0.25$. A pricing curve specifies a (measurable) threshold function $f(t) \colon [0, 1] \to [0, 1]$ that defines a threshold value at each possible arrival time. It then picks the first-arrived value that is above the threshold at its arrival time.

\begin{algorithm}[tbh]
\caption{Pricing Curve}
\label{alg:pc}
Let $f(t) \colon [0, 1] \to [0, 1]$ be its threshold function \\
\For{$\tau = 0 \to 1$ where $x_i$ arrives at $t_i = \tau$}
{
    \uIf{$x_i \geq f(t_i)$}{Pick $x_i$ and \textbf{exit}}
    \Else{Skip $x_i$}
}
\end{algorithm}

\begin{theorem}
\label{thm:mainpc}
    \Cref{alg:pc} with $f(t) = 1 - t$ achieves a regret of ~$0.25$.
\end{theorem}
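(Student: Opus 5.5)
The plan is to condition on everything except the arrival time of the top value $x_1$ and show that the conditional regret is at most $u-u^2\le 1/4$ for a suitable parameter $u\in[0,1]$. Averaging over the conditioning then gives the bound. A separate instance shows it is attained.

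\textbf{Eligibility.} With $f(t)=1-t$, value $x_i$ passes the threshold at its arrival time iff $t_i\ge 1-x_i$; call such an $i$ \emph{eligible}. The algorithm picks the eligible value with the earliest arrival time, and picks $0$ if there is none. Let $\tau$ be the time of the pick, with $\tau=1$ if nothing is picked. Then the picked value is always at least $f(\tau)=1-\tau$; this also covers the no-pick case, since then $1-\tau=0$.

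\textbf{Conditioning.} Fix the arrival times of all values other than $x_1$. Let $s$ be the earliest arrival time among the eligible values $i\ge 2$, with $s=1$ if there is none. Write $a=x_1$, and note $t_1\sim U[0,1]$ independently of $s$.
\begin{itemize}
\item If $s<1-a$: something is picked at time $\tau\le s$, so the picked value is at least $1-s>a$. The conditional regret is at most $0$.
\item If $s\ge 1-a$: put $u=s-(1-a)\in[0,a]$.
  \begin{itemize}
  \item With probability $u$, $t_1\in[1-a,s)$. Then $x_1$ is eligible and arrives before any other eligible value, so the algorithm picks exactly $a$.
  \item Otherwise $\tau=s$, so the algorithm picks some value at least $1-s=a-u$.
  \end{itemize}
  Hence the conditional expected pick is at least $ua+(1-u)(a-u)=a-u+u^2$. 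The conditional regret is therefore at most $u-u^2\le 1/4$.
\end{itemize}
The case $s=1$ (no other eligible value) is included, since then $u=a$ and the bound reads $a-a^2$. Averaging over the conditioning gives $\reg(\vec x)\le 1/4$ for every $\vec x$.

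\textbf{Tightness.} To show the regret equals $0.25$, take $x_1=1$ and $n-1$ values equal to $1/2$, perturbed by tiny noise. As $n\to\infty$, the values $1/2$ become eligible at time $1/2$, and with probability tending to $1$ some value $\approx 1/2$ arrives just after time $1/2$, so $s\to 1/2$. The value $1$ is eligible at every time, so it is picked iff it arrives before $s$, which happens with probability $\to 1/2$; otherwise the algorithm picks about $1/2$. The expected pick tends to $\tfrac12\cdot 1+\tfrac12\cdot\tfrac12=3/4$, so the regret tends to $1/4$. This matches the two-value structure of the worst case ($u=1/2$).

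The main obstacle I expect is justifying the pointwise lower bound on the picked value in the second case without tracking which $i\ge2$ is picked. This is exactly where the monotone threshold does the work: every value picked at time $s$ is at least $1-s$, so one need not know its identity. Beyond that, only the measure-zero issue of ties needs a brief remark.
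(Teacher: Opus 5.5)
Your proof is correct and takes essentially the paper's approach. The paper also fixes the other arrival times (it phrases this as a harder variant where the adversary chooses them), keeps only the earliest eligible competitor $x_j$ (your $s$ is its $t_j$), and uses $x_j \ge 1-t_j$ to bound the regret by $u(1-u)\le 1/4$; your tightness instance is the paper's Instance family (2) from \cref{thm:pc_lower}.
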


\begin{proof}
We consider a more difficult variant of our setting, and show that the regret of the pricing curve with $f(t) = 1 - t$ is at most $0.25$ in that variant, hence implying our theorem statement.

In the variant, $x_1$ arrives at $t_1 \sim U[0, 1]$, but $t_2, t_3, \ldots, t_n$ are determined upfront by an adversary. We now characterize the hardest instance for our algorithm in this variant.

First, notice that if $x_i < f(t_i)$ for some $i \geq 2$, we can simply remove this $x_i$ without changing the behavior or the regret of our algorithm. We do this for all such $x_i$'s.

After that, let $j \coloneqq \argmin_{i \geq 2} t_i$. (If all $x_i$'s with $i \geq 2$ have been removed and hence the index $j$ does not exist, we can add back $x_2 = 0$ with $t_2 = 1$ without changing the analysis.) Again, we can remove all values except $x_1$ and $x_j$ without changing the behavior or the regret of our algorithm.

Now we analyze the regret after these simplifications. Observe that our algorithm picks $x_1$ if and only if $t_1 \in [1 - x_1, t_j]$, and picks $x_j$ otherwise. Therefore, the regret of our algorithm is
\[
\reg ~=~ (x_1 - x_j) \cdot (1 - (t_j - (1 - x_1))) ~\leq~ (x_1 - x_j) \cdot (1 - x_1 + x_j) ~\leq~ 0.25,
\]
where in the first inequality, we use the fact that $x_j \geq 1 - t_j$.
\end{proof}

\Cref{thm:mainpc} shows that a pricing curve with a simple threshold function achieves a lower regret than that of $1 - 1/e$ of the ``$1/e$ law'', and the optimal regret of $1/e$ in the adversarial-order setting. As it turns out (in \cref{thm:pc_lower}), this regret bound of $0.25$ is optimal among all pricing curves.

\begin{theorem}
\label{thm:pc_lower}
    \Cref{alg:pc} has a regret of at least~$0.25$ for any pricing curve $f(t)$.
\end{theorem}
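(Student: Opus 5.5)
The plan is to exhibit, for an arbitrary threshold function $f$, two simple instances built around the single value $v = 1/2$, and to show that the regret of \cref{alg:pc} on at least one of them is (arbitrarily close to) at least $1/4$. The quantity that controls both instances is how early the curve becomes willing to accept $1/2$. Let
\[
S \coloneqq \{t \in [0,1] : f(t) \le 1/2\},
\]
and let $T \in [0,1]$ be the essential infimum of $S$. That is, $T$ is the largest $s$ such that $S \cap [0,s]$ has Lebesgue measure zero, with $T = 1$ if $S$ is null. Measurability of $f$ makes $S$ measurable.

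\emph{First instance (a single value $1/2$).} Take $n = 1$ and $x_1 = 1/2$. The algorithm picks $x_1$ exactly when $t_1 \in S$, which happens with probability $|S|$. Since $S \cap [0,T]$ is null, $|S| \le 1 - T$. Hence
\[
\reg \ge \tfrac12 - \tfrac12(1-T) = \tfrac{T}{2}.
\]
If one prefers $n$ large, padding with zeros changes nothing, because a value $0$ is accepted only when $f(t) = 0$, and accepting it only lowers the algorithm's payoff.

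\emph{Second instance (one value $1$ and many copies of $1/2$).} Take $x_1 = 1$ and $x_2 = \dots = x_n = 1/2$ with $n \to \infty$. Assume $T < 1$; otherwise the first instance already gives regret $1/2$. Fix $\varepsilon > 0$. By the definition of the essential infimum, $S \cap [T, T+\varepsilon]$ has positive measure $\delta > 0$. With probability at least $1 - (1-\delta)^{n-1} \to 1$, some copy of $1/2$ arrives in $S \cap [T, T+\varepsilon]$. Call this event $E$. The algorithm can obtain the value $1$ only if it has not already stopped, so on $E$ it must have $t_1 < T + \varepsilon$. Consequently
\[
\pr[\text{pick } 1] \le T + \varepsilon + (1-\delta)^{n-1},
\]
and otherwise the algorithm collects at most $1/2$. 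The regret is therefore at least
\[
\tfrac12\bigl(1 - T - \varepsilon - (1-\delta)^{n-1}\bigr),
\]
which tends to $(1-T)/2$ as $n \to \infty$ and then $\varepsilon \to 0$. Ties among the equal values are harmless, since arrival times are almost surely distinct. If strictly decreasing values are wanted, small perturbations $1/2 + \eta_i$ with $\eta_i \to 0$ give the same bound in the limit.

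\emph{Combining the two.} The regret of $f$ is at least
\[
\max\Bigl\{\tfrac{T}{2},\ \tfrac{1-T}{2}\Bigr\} \ge \tfrac14,
\]
since the regret is a supremum over instances. The only delicate step is the measure-theoretic one: $f$ is merely measurable, so the naive ``first time $f$ drops below $1/2$'' must be replaced by the essential infimum $T$. Using $T$ is exactly what makes both inequalities hold simultaneously:
\begin{itemize}
\item the first instance uses that $S$ is essentially contained in $[T,1]$;
\item the second uses that $S$ has positive mass just to the right of $T$.
\end{itemize}
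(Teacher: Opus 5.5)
Your proof is correct and is essentially the paper's argument: the same two instances (a lone value $1/2$, and a single $1$ among many copies of $1/2$), with the same dichotomy on how much of the timeline accepts $1/2$. The only difference is in parametrization: you use the essential infimum $T$ of $\{t : f(t) \le 1/2\}$ and get regret at least $\max\{T/2, (1-T)/2\}$, while the paper splits on whether this set has measure at most $1/2$ and, if not, uses that it has positive measure inside $[0,1/2]$, which avoids the essential-infimum bookkeeping.
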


\begin{proof}
    Consider the following two instances:
    \begin{itemize}
        \item Instance (1): $n = 1$ and $x_1 = 0.5$.
        \item Instance family (2): $x_1 = 1$ and $x_2 = \dots = x_n = 0.5$.
    \end{itemize}
    We will show that \Cref{alg:pc} with any fixed threshold function $f(t)$ has a regret of at least $0.25$ for either Instance (1) or Instance family (2).

    Let $\alpha \coloneqq \mu(\{t \in [0, 1] : f(t) < 0.5\})$ denote the fraction (Lebesgue measure) of time where a value of $0.5$ is above the threshold. If $\alpha \leq 0.5$, then the pricing curve incurs regret of at least $(1 - \alpha)x_1 \geq 0.25$ in Instance (1). Otherwise (if $\alpha > 0.5$), note that there is a probability of $0.5$ that $x_1 = 1$ arrives at $t_1 \geq 0.5$ in Instance family (2). However, since $\mu(\{t \in [0, 0.5] : f(t) \geq 0.5\}) \geq \alpha - 0.5 > 0$, we know that as $n \to \infty$, the probability that the pricing curve has accepted some value by time $t = 0.5$ approaches $1$. Therefore, the regret of the pricing curve in Instance family (2) is at least $0.5(x_1 - x_2) = 0.25$.
\end{proof}

\section{Main Result: The Regret of Best-Only Pricing Curves}
\label{sec:upper}
By now, we have seen that the pricing curves can achieve a regret of $0.25$ but not better. That leads to a natural question: Can we achieve a lower regret using other -- hopefully still simple -- algorithms?

In this section, we move our attention to the class of \emph{best-only pricing curves}, and show that the regret of one such algorithm is at most $0.190$. We complement our result by a lower bound of $0.171$ for any algorithm in this class.

Similar to a pricing curve, a best-only pricing curve specifies a threshold function $f(t) : [0, 1] \rightarrow [0, 1]$. When the value $x_i$ arrives at time $t_i$, we accept it if and only if both
\begin{itemize}
\item $x_i \geq f(t_i)$, and
\item $x_i$ is the largest value that has arrived so far.\footnote{To avoid tie-breaking, we add an arbitrarily small noise to each value $x_i$, which makes all values almost surely unique, without affecting the performance guarantee of the algorithm.}
\end{itemize}
The following \Cref{alg:bopc} formally defines the algorithm we run. Note that it is possible that \Cref{alg:bopc} does not accept any value. In this case, we assume that the value we choose is $0$ and the regret is $x_1$.

\begin{algorithm}[tbh]
\caption{Best-Only Pricing Curve}
\label{alg:bopc}
Let $f(t) \colon [0, 1] \to [0, 1]$ be its threshold function \\
\For{$\tau = 0 \to 1$ where $x_i$ arrives at $t_i = \tau$}
{
    \uIf{$x_i \geq f(t_i)$ and $x_i$ is the largest value that has arrived so far}{Pick $x_i$ and \textbf{exit}}
    \Else{Skip $x_i$}
}
\end{algorithm}

\subsection{Regret Relaxation}
\label{sec:reg-relax}

Fix an instance with values $x_1, \ldots, x_n$, where $1 \geq x_1 \geq x_2 \geq \cdots \geq x_n \geq 0$. We first provide some preliminary analysis for \Cref{alg:bopc}, which will indicate that the hard instance for \Cref{alg:bopc} after relaxing the regret function has a relatively simple structure.

For simplicity of our analysis, in this section, we only consider \Cref{alg:bopc} with a pricing curve $f(t)$ with the following properties:
\begin{itemize}
    \item $f(t)$ is continuous.
    \item $f(t)$ is monotonically strictly decreasing.
    \item $f(0) = 1$.
\end{itemize}
With the above assumptions, the following observation suggests that it is sufficient only to consider the instances with values $x_1 \geq x_2 \geq \cdots \geq x_n \geq f(1)$.
\begin{observation}
    \label{obs:geq}
    Fix the pricing curve $f(t)$ that satisfies continuity, monotonicity, and that $f(0) = 1$. Let $x_1 \geq x_2 \geq \cdots \geq x_n$ be an input instance that maximizes the expected regret of \Cref{alg:bopc} with pricing curve $f(t)$. Then, there exists a setting of $\vec x = (x_1, \ldots, x_n)$ that satisfies:
    \begin{itemize}
        \item $n \geq K$ for some integer $K \geq 10$
        \item $x_1 \geq x_2 \geq \cdots \geq x_n \geq f(1)$.
    \end{itemize}
\end{observation}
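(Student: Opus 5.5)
The plan is to start from an arbitrary worst-case instance and apply two regret-preserving or regret-increasing modifications. The first raises every value below $f(1)$ up to $f(1)$ (ordered by noise). The second pads the instance so that $n \ge K$.

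\textbf{Handling values below $f(1)$.} Since $f$ is continuous and strictly decreasing with $f(0)=1$, we have $f(t) > f(1)$ for every $t<1$. So a value $x_i < f(1)$ is never above the threshold at any arrival time $t_i<1$, and $t_i = 1$ has probability zero. Hence such a value is never accepted.

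Such a value can still matter through the ``largest so far'' condition. However, it can only block values smaller than itself, and those are also below $f(1)$, so they are never accepted anyway. Consequently, deleting all values $x_i < f(1)$ (when $x_1 \ge f(1)$) does not change the algorithm's behavior on any realization. The algorithm's picks are identical, so $\E[\alg]$ and $x_1$, and hence the regret, are unchanged.

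If instead $x_1 < f(1)$, the algorithm accepts nothing and the regret is $x_1 < f(1)$. Replacing the whole instance by the single value $f(1)$ (or by $n$ copies of it, perturbed by noise) also gives regret $f(1)$, which is at least as large. So this case is dominated and can be treated the same way.

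\textbf{Ensuring $n \ge K$.} After the deletions, we need to add values without decreasing the regret. Here I would add $K$ extra values equal to $f(1)$, perturbed by infinitesimal noise so that they sit just below every existing value, or exactly at $f(1)$ minus noise. With noise below $f(1)$ they are never accepted and block nothing, so the regret is exactly unchanged.

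The statement requires $x_n \ge f(1)$, so I would place the noise so that the new values are at least $f(1)$ but below all original values. Values at threshold level $f(1)$ can only be accepted at time $t=1$, which has probability zero, so they are still never accepted. They also never block a larger value, since they are smaller than every original value. The regret is therefore preserved, and the padded instance, still a maximizer, satisfies both bullet points.

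\textbf{Main obstacle.} The main difficulty is the tie-breaking and noise bookkeeping at the boundary $f(1)$. One must check that values equal to $f(1)$ behave identically to absent values, which relies on $t_i<1$ almost surely and on strict monotonicity of $f$. One must also check that the padded values are ranked below all original ones, so that they never affect the ``largest so far'' test for any original value. A secondary point is the degenerate case $x_1<f(1)$, which the replacement described above handles.
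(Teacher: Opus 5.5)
Your proposal is correct and follows essentially the same route as the paper: raise $x_1$ to $f(1)$ in the degenerate case, delete all values below $f(1)$, and pad with copies of $f(1)$, which by strict monotonicity are accepted only with probability zero. You also spell out two points the paper leaves implicit, namely why the deleted values can only block smaller, never-accepted values and how the padded copies are tie-broken. Your opening summary says you raise the low values rather than delete them, but the deletion you actually carry out in the body works, and raising would work too.
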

\begin{proof}
It is sufficient to show that for any instance that does not satisfy the properties in \Cref{obs:geq}, there exists another instance with a regret no smaller than the current instance. Fix $\vec x = (x_1, \cdots, x_n)$ to be an instance that does not satisfy the above properties. Consider going through the following three inspections sequentially.
\begin{itemize}
    \item \textit{Inspection 1: $x_1 < f(1)$.} Note that in this case, the regret of the instance is $x_1 < f(1)$. In this case, we update $x_1$ to $f(1)$, and the regret becomes $f(1)$. 
    \item \textit{Inspection 2: $x_n < f(1)$.} In this case, we remove $x_n$ from the instance, and the regret of the instance remains unchanged. We ask the instance to go through Inspection 2 for multiple times.
    \item \textit{Inspection 3: $n \geq 1$ but $n < K$.} In this case, we add multiple $f(1)$ into the instance until $n \geq K$. Note that we require $f(t)$ to be strictly decreasing. Therefore, a value $f(1)$ can only be chosen with probability $0$, and adding a constant number of $f(1)$ into the instance does not change the regret.
\end{itemize}
Note that after passing the above three inspections, the regret of the instance does not decrease, but the properties in \Cref{obs:geq} are satisfied, which finishes the proof.
\end{proof}

With \Cref{obs:geq}, in the following of the section we only consider input instance $x_1 \geq x_2 \geq \cdots \geq x_n \geq f(1)$ with $n \geq 10$.

Let $g(x) \colon [f(1), 1] \rightarrow [0, 1]$ be the ``inverse function'' of $f(t)$, where, formally, $g(x) \coloneqq \inf \{t \in [0, 1]: x \geq f(t)\}$. In other words, $g(x_i)$ is the earliest time that the algorithm can accept $x_i$. (Note that $t_i \neq g(x_i)$ almost surely, and hence the corner case where $t_i = g(x_i)$ can be ignored.)
For simplicity, we use $\theta_i$ to denote $g(x_i)$, the earliest time that the algorithm can accept $x_i$. We have $\theta_n = 1$.

Next, we give a regret upper-bound for the instance $\vec x$. Perhaps surprisingly, we can simplify the regret for a fixed instance into the more manageable expression in \cref{lem:bopc_eq}.

\begin{lemma}
\label{lem:bopc_eq}
It holds that
\[
\reg(\vec x) ~=~ x_1 \cdot \theta_1 + \left(\sum_{i = 2}^{n - 1} (x_1 - x_i) \cdot \frac{1}{i} \cdot (1-\theta_i)^i\right) - \left(\sum_{i = 2}^{n - 1} (x_1 - x_i) \cdot \sum_{k = i+1}^{n-1} \left(\frac{1}{k-1} - \frac{1}{k}\right) \cdot (1-\theta_k)^k\right).
\]
\end{lemma}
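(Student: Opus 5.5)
The plan is to reduce the behaviour of \Cref{alg:bopc} to a family of simple events, compute the probability that each item is picked through a short recursion, and then assemble the regret. The decomposition I will use is
\[
\reg(\vec x) ~=~ x_1 - \sum_i x_i\,\pr[\text{pick } i] ~=~ x_1\,\pr[\text{no pick}] + \sum_{i\ge 2}(x_1-x_i)\,\pr[\text{pick } i].
\]
Since $\theta_n = 1$, item $n$ is picked with probability $0$, so all sums stop at $n-1$.

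\textbf{Structure of the picks.} For each $k$, let $E_k$ be the event that item $k$ arrives first among the top-$k$ items $\{1,\dots,k\}$ and that $t_k > \theta_k$. An item is the largest seen so far exactly when it arrives first among the top-$k$ items, where $k$ is its index. So the prefix maxima in arrival order are exactly the items $k$ that arrive first among $\{1,\dots,k\}$, and they appear with strictly decreasing index over time. The algorithm picks the first prefix maximum whose arrival time exceeds its $\theta$. Hence the picked item is the \emph{largest} index $k$ for which $E_k$ holds, and nothing is picked if no $E_k$ holds. Integrating over $t_k$ gives
\[
a_k \coloneqq \pr[E_k] = \int_{\theta_k}^1 (1-t)^{k-1}\d t = \tfrac{1}{k}(1-\theta_k)^k.
\]

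\textbf{Recursion.} Next I decompose $\pr[\text{pick } i] = a_i - \sum_{k>i}\pr[E_i \wedge \text{pick } k]$. The claim to verify is $\pr[E_i \wedge \text{pick } k] = \frac1i \pr[\text{pick } k]$ for $k > i$. Condition on the event $\{\text{pick } k\}$ together with $t_k = s$. This event constrains items with index $\ge k$, and otherwise only says that items $1,\dots,k-1$ all arrive after $s$; conditionally their times are i.i.d.\ uniform on $(s,1]$. Moreover $\theta_i \le \theta_k < s$, because $g$ is monotone and $x_i \ge x_k$. So $E_i$ reduces to ``$i$ arrives first among $\{1,\dots,i\}$'', which has conditional probability exactly $1/i$.

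Writing $S_j \coloneqq \sum_{k\ge j}\pr[\text{pick } k]$, this gives $\pr[\text{pick } i] = a_i - S_{i+1}/i$, or equivalently $S_i = a_i + (1-\tfrac1i)S_{i+1}$. For $i=1$ the correction term vanishes, so $S_1 = a_1 = 1-\theta_1$ and $\pr[\text{no pick}] = \theta_1$. This produces the $x_1\theta_1$ term. Unrolling the recursion with the telescoping product $\prod_{m=i+1}^{k-1}\frac{m-1}{m} = \frac{i}{k-1}$ yields
\[
S_{i+1} = \sum_{k\ge i+1} \frac{i}{k-1}\,a_k,
\qquad\text{so}\qquad
\frac{S_{i+1}}{i} = \sum_{k=i+1}^{n-1}\frac{(1-\theta_k)^k}{k(k-1)} = \sum_{k=i+1}^{n-1}\Big(\frac1{k-1}-\frac1k\Big)(1-\theta_k)^k.
\]
The upper limit is $n-1$ because $a_n = 0$.

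\textbf{Assembly and main obstacle.} Substituting $\pr[\text{pick } i] = \frac1i(1-\theta_i)^i - \sum_{k>i}(\frac1{k-1}-\frac1k)(1-\theta_k)^k$ into the decomposition gives exactly the stated identity. The main obstacle is the conditional-independence claim $\pr[E_i \mid \text{pick } k] = 1/i$. It must be argued carefully by conditioning on $t_k$ and on all arrival times of items $k,\dots,n$, so that the remaining top-$(k-1)$ items are seen to be i.i.d.\ uniform after $t_k$. The argument also relies on monotonicity of the $\theta$'s to make the threshold condition in $E_i$ automatic. Ties have probability zero thanks to the noise assumption and continuity of $f$, so they can be ignored.
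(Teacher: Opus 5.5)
Your proof is correct, and it takes a genuinely different route from the paper's.

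The paper computes $\pr[A_i]$ directly. It conditions on $t_i$ lying in $[\theta_k,\theta_{k+1})$ and splits on whether some item of $[k]\setminus\{i\}$ arrived before $t_i$. If one did, it takes the best such item $j$, which must satisfy $j>i$ and $t_j<\theta_j$. This gives the integral $\sum_{k}\int_{\theta_k}^{\theta_{k+1}}\bigl((1-\tau)^{k-1}+\sum_{j=i+1}^{k}\theta_j(1-\tau)^{j-2}\bigr)\d\tau$, which the paper then simplifies by swapping sums and telescoping.

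You work differently. You characterize the chosen item globally, as the largest index $k$ for which $E_k$ holds. You compute only the simple marginals $a_k=\frac1k(1-\theta_k)^k$. Everything else follows from the exchangeability identity $\pr[E_i\wedge\text{pick }k]=\frac1i\,\pr[\text{pick }k]$, which becomes the linear recursion $S_i=a_i+(1-\frac1i)S_{i+1}$.

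What your route buys:
\begin{itemize}
\item It avoids both the case analysis and the integral manipulation.
\item It shows where the coefficients $\frac{1}{k(k-1)}$ come from: $a_k$ contributes $\frac1k$, the telescoping product contributes $\frac{i}{k-1}$, and the recursion divides by $i$.
\item It gives $\pr[\text{no pick}]=\theta_1$ as the byproduct $S_1=a_1$, rather than by a separate argument.
\end{itemize}
The price is the conditional-independence step, which you rightly flag. It goes through: on $E_k$, every $m>k$ satisfies $\min_{j<m}t_j=\min\bigl(t_k,\min_{k<j<m}t_j\bigr)$. Hence $\{\text{pick }k\}=E_k\cap B$ with $B$ measurable with respect to $(t_k,\dots,t_n)$, so conditionally on $(t_k,\dots,t_n)$ the times of items $1,\dots,k-1$ are i.i.d.\ uniform on $(t_k,1]$.

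Both arguments rely equally on $\theta_n=1$ and on the monotonicity $\theta_i\le\theta_k$ for $i<k$.
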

\begin{proof}
Let $A_0$ be the event that the algorithm does not accept any value. This event $A_0$ occurs if and only if $x_1$ arrives before $\theta_1$, otherwise we must accept some value before or at the time that $x_1$ arrives. Therefore, $\Pr[A_0] = \theta_1$.

Let $A_i$ be the event that the algorithm accepts $x_i$. Now we fix $i$ and calculate the probability that $A_i$ occurs. Recall that $t_i \sim U[0, 1]$ is the arrival time of $x_i$. If $t_i < \theta_i$, then $x_i$ will be rejected. Otherwise, we have $t_i \in [\theta_k, \theta_{k+1})$ for some $k \geq i$. When $A_i$ occurs, there are two cases:
\begin{enumerate}
\item For all $s \in [k] \setminus \{i\}$, the value $x_s$ arrives after $t_i$. This happens with probability $(1-t_i)^{k-1}$.
\item Otherwise, let $j := \arg \min \{s \in [k] \setminus \{i\}: x_s \text{ arrives before } t_i\}$. In this case, $A_i$ occurs if and only if $j > i$ and $x_j$ arrives before $\theta_j$. This probability is $(1 - t_i)^{j - 2} \cdot \theta_j$.
\end{enumerate}
Therefore, through algebraic manipulation, we have
\begin{align*}
\Pr[A_i] &~=~ \sum_{k = i}^{n-1} \int_{\theta_k}^{\theta_{k+1}} \left((1-\tau)^{k-1} + \sum_{j = i+1}^k \theta_j \cdot (1-\tau)^{j-2}\right) \d \tau \\
&~=~ \left(\sum_{k=i}^{n-1} \int_{\theta_k}^{\theta_{k+1}} (1-\tau)^{k-1} \d \tau\right) + \left(\sum_{j = i+1}^{n-1} \theta_j \cdot \int_{\theta_j}^1 (1-\tau)^{j-2} \d \tau\right) \\
&~=~ \left(\sum_{k=i}^{n-1} \frac{1}{k} \left((1-\theta_k)^k - (1-\theta_{k+1})^k\right)\right) + \left(\sum_{j = i+1}^{n-1} \frac{\theta_j}{j-1} \cdot (1-\theta_j)^{j-1}\right) \\
&~=~ \sum_{k = i}^{n-1} \frac{1}{k} \left((1-\theta_k)^k - (1-\theta_{k+1})^k + \theta_{k+1}(1-\theta_{k+1})^k\right) \\
&~=~ \sum_{k = i}^{n-1} \frac{1}{k} \left((1-\theta_k)^k -(1-\theta_{k+1})^{k+1}\right) \\
&~=~ \frac{1}{i} \cdot (1-\theta_i)^i - \sum_{k = i+1}^{n-1} \left(\frac{1}{k-1} - \frac{1}{k}\right) \cdot (1-\theta_k)^k.
\end{align*}

Now we calculate the regret of our algorithm by plugging in these formulas.
\begin{align*}
\reg(\vec x) ~=~& (x_1 - 0) \cdot \Pr[A_0] + \sum_{i = 2}^{n} (x_1 - x_i) \cdot \Pr[A_i] \\
~=~& x_1 \cdot \theta_1 + \sum_{i = 2}^{n} (x_1 - x_i) \cdot \left(\frac{1}{i} \cdot (1-\theta_i)^i - \sum_{k = i+1}^{n-1} \left(\frac{1}{k-1} - \frac{1}{k}\right) \cdot (1-\theta_k)^k\right) \\
~=~& x_1 \cdot \theta_1 + \left(\sum_{i = 2}^{n} (x_1 - x_i) \cdot \frac{1}{i} \cdot (1-\theta_i)^i\right) \\ 
& - \left(\sum_{i = 2}^{n} (x_1 - x_i) \cdot \sum_{k = i+1}^{n} \left(\frac{1}{k-1}  - \frac{1}{k}\right) \cdot (1-\theta_k)^k\right),
\end{align*}
which matches the lemma statement.
\end{proof}

Next, we further relax the expression provided by \Cref{lem:bopc_eq} to a relatively simple form. We define the following.
\begin{definition}
We use $R_q(\vec x)$ (where $q = 2, 3, \ldots$) to present a relaxation of $\reg$:
\begin{align*}
R_q(\vec x) ~:=~&  x_1 \cdot \theta_1 + \left(\sum_{i = 2}^{n} (x_1 - x_i) \cdot \frac{1}{i} \cdot (1-\theta_i)^i\right) \\ 
& - \left(\sum_{i = 2}^{q - 1} (x_1 - x_i) \cdot \sum_{k = i+1}^{n} \left(\frac{1}{k-1}  - \frac{1}{k}\right) \cdot (1-\theta_k)^k\right)\\
~\geq~& \reg(\vec x),
\end{align*}
where the inequality holds by observing that we define $R_q(\vec x)$ by dropping some negative terms in.
\end{definition}

With the definition of $R_q(\vec x)$, we give the following \Cref{lem:maximized_at_eq_new}, which suggests that the hardest instance $\vec x$ that maximizes the relaxed regret function $R_q(\vec x)$ has a simple structure.
\begin{lemma}
\label{lem:maximized_at_eq_new}
For fixed $n, x_1, x_2, \dots, x_{q-1}$, if the pricing curve satisfies that $f(x)$ is non-increasing and convex, the function $R_q$ is maximized when $x_q = x_{q+1} = \cdots = x_{n}$.
\end{lemma}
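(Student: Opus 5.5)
The plan is to exploit the fact that, once $n,x_1,\dots,x_{q-1}$ are fixed, $R_q(\vec x)$ is \emph{separable} in the remaining variables $x_q,\dots,x_n$. In the subtracted double sum, the outer index only runs over $i\le q-1$, so for every $k\ge q$ the coefficient of $(1-\theta_k)^k$ is $C\cdot\left(\frac{1}{k-1}-\frac1k\right)$, where $C\coloneqq\sum_{i=2}^{q-1}(x_1-x_i)\ge 0$ does not depend on $k$. Collecting terms, $R_q(\vec x)=\mathrm{const}+\sum_{k=q}^{n}h_k(\theta_k)$, where I parametrize by $\theta=g(x)$, i.e.\ $x=f(\theta)$, and write
\[
h_k(\theta)\coloneqq \frac{(1-\theta)^k}{k}\left(A_k-f(\theta)\right),\qquad A_k\coloneqq x_1-\frac{C}{k-1}.
\]
The constraint $x_{q-1}\ge x_q\ge\dots\ge x_n\ge f(1)$ becomes $\theta_{q-1}\le\theta_q\le\dots\le\theta_n\le 1$, since $g$ is non-increasing. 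So the task is to maximize a separable sum over a monotone chain, and we must show that the chain constraint forces all of $\theta_q,\dots,\theta_n$ to be equal at an optimum.

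The key step is a single-crossing (ordered log-derivative) property. Wherever $h_k>0$, i.e.\ where $A_k>f(\theta)$, we have
\[
\frac{\d}{\d\theta}\log h_k(\theta)=-\frac{k}{1-\theta}+\frac{-f'(\theta)}{A_k-f(\theta)}.
\]
Convexity and monotonicity of $f$ make $-f'\ge 0$ non-increasing and $A_k-f(\theta)$ non-decreasing in $\theta$. Hence this derivative is non-increasing in $\theta$, so each $h_k$ is log-concave and single-peaked on its positive region. The derivative is also non-increasing in $k$: the first term decreases in $k$, and $A_k$ increases in $k$, which shrinks the second term. So if $h_k'(\theta)\le 0$ then $h_{k+1}'(\theta)\le 0$, and the unconstrained peaks move to \emph{smaller} $\theta$ as $k$ grows. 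The chain constraint, however, requires $\theta$ to be non-decreasing in $k$. On the region where $h_k\le 0$ (i.e.\ $f(\theta)\ge A_k$), increasing $\theta$ only helps, and the same ordering in $k$ persists. I would check this sign bookkeeping separately.

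With this property I would finish by a merging argument. Take an optimal $(\theta_q,\dots,\theta_n)$; a maximizer exists by compactness and continuity. Suppose it has at least two distinct levels, and let $B$ be the block of indices at the largest level $\theta_b$, with the preceding block $B'$ at level $\theta_a<\theta_b$. Optimality of $B$ against small downward moves gives $\sum_{k\in B}h_k'(\theta_b)\ge 0$. By the single-crossing in $k$, and since every index of $B'$ is smaller than every index of $B$, this yields $h_k'(\theta)\ge 0$ for all $k\in B'$ and all $\theta\le\theta_b$, using log-concavity to propagate from $\theta_b$ down to $\theta_a$. Therefore raising $B'$ up to $\theta_b$ does not decrease the sum, which merges the two blocks. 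Iterating this reduces to a single level, i.e.\ $x_q=\dots=x_n$.

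The main obstacle I expect is precisely this merging step. The block condition only controls a \emph{sum} of derivatives, not each term, and the $h_k$ may change sign. To pass from ``the sum over $B$ is $\ge 0$'' to ``each $h_k'$ with $k\in B'$ is $\ge 0$'', I would try first to prove the stronger pointwise statement that $\theta\mapsto h_{k+1}(\theta)/h_k(\theta)$ is non-increasing. This is equivalent to the log-derivative ordering above and handles the sum cleanly. I would deal with the sign-changing region by a direct case check.
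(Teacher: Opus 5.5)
Your plan is the paper's proof written in $\theta$-coordinates. It uses the same three ingredients: separability of $R_q$ in $x_q,\dots,x_n$; single-peakedness of each term (your log-concavity of $h_k$ is the paper's statement that $u_k$ is decreasing with $u_k(f(1))\ge 0$); and peaks ordered against the chain constraint (your log-derivative being non-increasing in $k$ is the paper's $z_k\le z_{k+1}$). Working with $f$ rather than $g$ makes the $k$-ordering a one-line observation and avoids needing convexity of the inverse. The real difference is the pooling step. The paper starts from an arbitrary $\vec x$ and moves all of $x_q,\dots,x_n$ to $\min\{x_{j-1},z_j\}$, where $j$ is the first index with $z_j>x_j$ (or to $z_n$ if there is none). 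This needs neither a maximizer nor first-order conditions. You instead merge blocks at a maximizer.

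Your merge step relies on a positivity fact you left open, and it fails as written in one boundary case.

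\emph{Positivity.} The sign-changing region you worry about never meets the feasible set. For $\theta\ge\theta_{q-1}$ and $k\ge q$, using $f(\theta)\le x_{q-1}$ and $C\le(q-2)(x_1-x_{q-1})$,
\[
A_k-f(\theta)~\ge~(x_1-x_{q-1})-\frac{C}{k-1}~\ge~(x_1-x_{q-1})\Bigl(1-\frac{q-2}{k-1}\Bigr)~\ge~0 .
\]
This is the same inequality the paper uses. So $h_k\ge 0$ on the feasible set. Moreover $h_k>0$ on $(\theta_{q-1},1)$: if $x_1=x_{q-1}$ then $C=0$, and strict monotonicity of $f$ gives $f(\theta)<x_1$. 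For $\theta_b<1$, this is what justifies your chain of steps:
\begin{itemize}
\item from $\sum_{k\in B}h_k'(\theta_b)\ge 0$, some $k_0\in B$ has $(\log h_{k_0})'(\theta_b)\ge 0$;
\item hence $(\log h_k)'(\theta_b)\ge 0$ for every $k\in B'$;
\item hence $(\log h_k)'\ge 0$ on $(\theta_a,\theta_b]$ for every $k\in B'$.
\end{itemize}

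\emph{The failing case $\theta_b=1$.} Here $h_k(1)=h_k'(1)=0$ for all $k\in B$, so the first-order condition is vacuous and nothing propagates. Worse, $(\log h_k)'(\theta)\le -\frac{k}{1-\theta}+O(1)\to-\infty$, so each $h_k$ is strictly decreasing just below $1$. Raising $B'$ to $1$ can therefore strictly lose. Handle this case by lowering $B$ to $\theta_a$ instead, which is harmless because $h_k(\theta_a)\ge 0=h_k(1)$. With this case added, your merging argument is complete.
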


\begin{proof}
We calculate $\frac{\partial R_q(\vec x)}{\partial x_k}$ for $j \leq k \leq n$. Recall that
\begin{align*}
R_q(\vec x) ~=~ &x_1 \cdot \theta_1 + \left(\sum_{i = 2}^{n} (x_1 - x_i) \cdot \frac{1}{i} \cdot (1-\theta_i)^i\right)  \\ & \quad - \left(\sum_{i = 2}^{q - 1} (x_1 - x_i) \cdot \sum_{k = i+1}^{n} \left(\frac{1}{k-1} - \frac{1}{k}\right) \cdot (1-\theta_k)^k\right).
\end{align*}
We have
\begin{align*}
\frac{\partial R_q(\vec x)}{\partial x_k} ~=~ & \frac{\partial}{\partial x_k}\left[\left((x_1 - x_k) \cdot \frac{1}{k} - \sum_{i = 2}^{q - 1} (x_1 - x_i) \cdot \frac{1}{k(k - 1)}\right) \cdot (1-\theta_k)^k\right]\\
~=~ & -\left((x_1 - x_k) - \sum_{i = 2}^{q - 1} (x_1 - x_i) \cdot \frac{1}{k - 1} \right) \cdot (1-\theta_k)^{k - 1} \cdot \theta'_k - \frac{1}{k} \cdot (1-\theta_k)^k,
\end{align*}
where $\theta'_k$ denotes $g'(x_k)$. We note that $\frac{\partial R_q(\vec x)}{\partial x_k}$ does not depend on $x_{k'}$ for $k' \in [q, n] \setminus \{k\}$.

For $k \in [q, n]$ and $f(1) \leq x \leq x_{q-1}$, define
\[
u_k(x) := - k \cdot \left((x_1 - x) -  \sum_{i = 2}^{q - 1} (x_1 - x_i) \cdot \frac{1}{k - 1} \right) \cdot g'(x) - (1-g(x)).
\]
Recall that we defined $\theta_k = g(x_k)$ and $\theta'_k =g'(x_k)$. Then, we have
\begin{align*}
    &u_k(x_k) := - k \cdot \left((x_1 - x_k) - \sum_{i = 2}^{q - 1} (x_1 - x_i)\cdot \frac{1}{k - 1} \right)  \cdot \theta'_k - (1-\theta_k)\\
    &\text{and} \qquad \frac{\partial R_q(\vec x)}{\partial x_k} = \frac{1}{k} \cdot u_k(x_k) \cdot (1 - \theta_k)^{k - 1}.
\end{align*}

We next examine the condition of $\frac{\partial \ub}{\partial x_k} = 0$, which is equivalent to $u_k(x_k) = 0$. Note that have
\[
u'_k(x) =  (k + 1) \cdot g'(x)- k \cdot \left((x_1 - x) + \sum_{i = 2}^{q - 1} (x_1 - x_i) \cdot \frac{1}{k - 1} \right) \cdot g''(x) < 0.
\]
The inequality holds because we have $g'(x) < 0$ and $g''(x) \geq 0$, which is true because we assumed $f(t)$ is strictly decreasing and convex, and $g(x)$ is the inverse function of $f(t)$, and  
\[
(x_1 - x) - \sum_{i = 2}^{q - 1} (x_1 - x_i) \cdot \frac{1}{k - 1} \geq (x_1 - x_{q-1}) - \frac{q-2}{k-1} (x_1 - x_{q-1}) \geq 0.
\]

Therefore, function $u_k(x)$ is decreasing within $[f(1), x_{q-1}]$. Additionally, notice that
\begin{align*}
    &u_k(f(1)) = -k \cdot \left((x_1 - f(1)) - \sum_{i = 2}^{q - 1} (x_1 - x_i)\cdot \frac{1}{k - 1} \right) \cdot g'(f(1)) ~\geq~ 0.
\end{align*}
Therefore, either equation $u_k(x) = 0$ has a unique solution within $[f(1), x_{q-1}]$, or $u_k(x) > 0$ for every $x \in [f(1), x_{q-1}]$. Let $z_k$ be the unique solution of $u_k(x) = 0$ within $[f(1), x_{q-1}]$. If the solution does not exist, we define $z_k = x_{q-1}$. Then, $\frac{\partial \ub}{\partial x_k}(x)$ is non-negative for $x \in [f(1), z_k]$, and is non-positive for $x \in (z_k, x_{q-1}]$.

We further observe that
\begin{align*}
    u_{k+1}(z_k) - u_k(z_k) &~=~ -(x_1 - z_k) \cdot g'(z_k) + \left(\frac{k+1}{k} - \frac{k}{k-1}\right) \cdot \sum_{i = 2}^{q - 1} (x_1 - x_i) \cdot g'(z_k)  \geq 0,
\end{align*}
i.e., $u_{k+1}(z_k) \geq u_k(z_k)$. If $u_k(z_k) = 0$, then we have $u_{k+1} (z_k) \geq 0$, and there must be $z_k \leq z_{k+1}$, because  function $u_{k+1}(x)$ is decreasing. Otherwise, we have $z_k = x_{q-1}$, and $u_{k+1}(x_{q-1}) > 0$, so $z_{k+1} = x_{q-1}$ should hold. Therefore, in both cases, we have $z_k \leq z_{k+1}$.

Finally, we finish our proof by showing the following claim: given vector $\vec x = (x_1, \ldots, x_n \geq f(1))$, there exists a vector $\vec w  = (w_1, \ldots, w_n)$ that satisfies the following conditions:
\begin{itemize}
    \item For $k \in [q-1]$, $w_k = x_k$.
    \item $w_{q} = w_{q+1} = \dots =w_{n}$.
    \item $R_q(\vec x) \leq R_q(\vec w)$.
\end{itemize}
We construct $\vec w$ by discussing the following three cases. \cref{fig:var} illustrates the relative rankings among different variables.

\begin{figure}[H]
\centering
\begin{tikzpicture}[yscale=1.4]

\node at (-1, 2) {Case 1:};
\draw[thick, ->] (0, 2) -- (10, 2);
\foreach \x in {1.8, 3.8, 5.3, 7.3} 
    \fill[blue] (\x, 2) circle (2pt);

\node[below] at (1.8, 1.9) {\textcolor{red}{$z_k$}};
\node[below] at (3.8, 1.9) {$z_{n}$};
\node[below] at (5.3, 1.9) {$x_{n}$};
\node[below] at (7.3, 1.9) {\textcolor{red}{$x_k$}};

\draw[->, thick, blue] (3.8, 2.35) -- (3.8, 2.1);

\node at (-1, 1) {Case 2:};
\draw[thick, ->] (0, 1) -- (10, 1);
\foreach \x in {1, 2, 3, 4, 5, 6.5, 7.3, 8.8} 
    \fill[blue] (\x, 1) circle (2pt);

\node[below] at (1, 0.9) {\textcolor{red}{$x_k$}};
\node[below] at (2, 0.9) {\textcolor{cyan}{$z_\ell$}};
\node[below] at (3, 0.9) {\textcolor{orange}{$z_{j-1}$}};
\node[below] at (4, 0.9) {$x_j$};
\node[below] at (5, 0.9) {\textcolor{orange}{$x_{j-1}$}};
\node[below] at (6.5, 0.9) {$z_j$};
\node[below] at (7.3, 0.9) {\textcolor{cyan}{$x_\ell$}};
\node[below] at (8.8, 0.9) {\textcolor{red}{$z_k$}};

\draw[->, thick, blue] (5, 1.35) -- (5, 1.1);

\node at (-1, 0) {Case 3:};
\draw[thick, ->] (0, 0) -- (10, 0);
\foreach \x in {1, 2, 4, 5, 6.5, 7.3, 8.8} 
    \fill[blue] (\x, 0) circle (2pt);

\node[below] at (1, -0.1) {\textcolor{red}{$x_k$}};
\node[below] at (2, -0.1) {\textcolor{cyan}{$z_\ell$}};
\node[below] at (4, -0.1) {$x_j$};
\node[below] at (5, -0.1) {$z_j$};
\node[below] at (6.5, -0.1) {\textcolor{orange}{$x_{j-1}$}};
\node[below] at (7.3, -0.1) {\textcolor{cyan}{$x_\ell$}};
\node[below] at (8.8, -0.1) {\textcolor{red}{$z_k$}};

\draw[->, thick, blue] (5, 0.35) -- (5, 0.1);

\end{tikzpicture}
\caption{Illustrating the relative rankings of different variables in three cases}
\label{fig:var}
\end{figure}
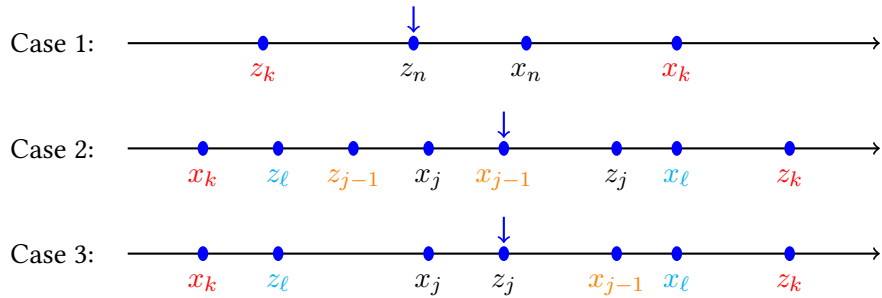

\noindent \textit{Case 1:} $z_{n} \leq x_{n}$. In this case, we set $w_q =  \dots = w_{n} = z_{n}$. Since $z_k \leq z_{n} \leq x_{n} \leq x_k$ for all $k \in [q, n-1]$, there must be $\frac{\partial \ub}{\partial x_k}(x) \leq 0$ for $k \in [q, n]$ and $x \in [z_{n}, x_k]$. Therefore, we have 
\[
R(\vec w) - R(\vec x) ~=~ \sum_{k=q}^{n-1} \int_{x_k}^{z_{n-1}} \frac{\partial \ub}{\partial x_k}(x) \d x ~=~-\sum_{k=q}^{n-1} \int_{z_{n-1}}^{x_k} \frac{\partial \ub}{\partial x_k}(x) \d x ~\geq~ 0.
\]
The first equality is feasible because the function $\frac{\partial \ub}{\partial x_k}$ does not depend on $k' \in [q, n] \setminus \{k\}$, i.e., changing the value of $x_{k'}$ does not change the function $\frac{\partial \ub}{\partial x_k}$.

\noindent \textit{Case 2:} Let $j = \min \{k \in [q, n]: z_k > x_k\}$; $j$ satisfies the condition $x_{j-1} < z_j$. We first note that if we are not in Case 1, at least we have $n \in \{k \in [5, n]: z_k > x_k\}$, and thus $j$ is well defined. In this case, we set $w_q = w_{q+1} =\dots = w_{n} = x_{j-1}$.
For $k \in [j, n]$, we have
\[
z_k ~\geq~ z_j ~>~ x_{j-1} ~\geq~ x_j ~\geq~ x_k.
\]
Therefore, $\frac{\partial \ub}{\partial x_k}(x) \geq 0$ for $k \in [j, n]$ and $x \in [x_k, x_{j-1}]$. On the other hand, for $\ell \in [q, j-1]$, we have
\[
z_\ell ~\leq~ z_{j-1} ~\leq~ x_{j-1} ~\leq~ x_\ell,
\]
where the inequality $z_{j-1} \leq x_{j-1}$ is from the assumption that $j$ is the minimum index that satisfies $z_j > x_j$. Therefore, we have $\frac{\partial \ub}{\partial x_\ell}(x) \leq 0$ for $\ell \in [q, j-1]$ and $x \in [x_{j-1}, x_\ell]$. Then,
\[
R(\vec w) - R(\vec x) ~=~  \sum_{k=j}^{n} \int_{x_k}^{x_{j-1}} \frac{\partial \ub}{\partial x_k}(x) \d x -  \sum_{l=q}^{j - 1} \int_{x_{j-1}}^{x_\ell} \frac{\partial \ub}{\partial x_\ell}(x) \d x~\geq~ 0.
\]

\noindent \textit{Case 3:} Let $j = \min \{k \in [q, n]: z_k > x_k\}$; $j$ satisfies that $z_j \leq x_{j-1}$. In this case, we set $w_q = w_{q+1} =\dots = w_{n-1} = z_j$.
For $k \in [j, n ]$, we have
\[
z_k ~\geq~ z_j ~>~  x_j ~\geq~ x_k.
\]
Therefore, $\frac{\partial \ub}{\partial x_k}(x) \geq 0$ for $k \in [j, n]$ and $x \in [x_k, z_j]$. On the other hand, for $\ell \in [q, j-1]$, we have
\[
z_\ell ~\leq~ z_j ~\leq~ x_{j-1} ~\leq~ x_\ell.
\]
Therefore, we have $\frac{\partial \ub}{\partial x_\ell}(x) \leq 0$ for $\ell \in [q, j-1]$ and $x \in [z_j, x_\ell]$.
Then,
\[
R(\vec w) - R(\vec x) ~=~  \sum_{k=j}^{n} \int_{x_k}^{z_j} \frac{\partial \ub}{\partial x_k}(x) \d x -  \sum_{l=q}^{j - 1} \int_{z_j}^{x_\ell} \frac{\partial \ub}{\partial x_\ell}(x) \d x~\geq~ 0.
\]
In summary, for any vector $\vec x$ with fixed $n, x_1, \dots, x_{q-1}$, we can always update the values $x_q, \dots, x_{n}$ to make sure that $x_q = \dots = x_{n}$ without decreasing the value of $R(\vec x)$. This implies that $R(\vec x)$ is maximized when $x_q  = \dots = x_{n}$ and hence finishes the proof.
\end{proof}

With \Cref{lem:maximized_at_eq_new}, it's sufficient to consider $R_q(\vec x)$ under the assumption that $x_q = x_{q+1} = \cdots = x_n$. Inspired by this, we define the following.
\begin{definition}
We define $\rstar_q(\vec x) (q = 2, 3, \dots)$ to present a further relaxation of $\reg$:
\begin{align*}
\rstar_q(\vec x) ~:=~&  x_1 \cdot \theta_1 + \left(\sum_{i = 2}^{q - 1} (x_1 - x_i) \cdot \frac{1}{i} \cdot (1-\theta_i)^i\right) + \left(\sum_{i = q}^{\infty} (x_1 - x_q) \cdot \frac{1}{i} \cdot (1-\theta_q)^i\right) \\ 
& - \left(\sum_{i = 2}^{q - 1} (x_1 - x_i) \cdot \sum_{k = i+1}^{q-1} \left(\frac{1}{k-1}  - \frac{1}{k}\right) \cdot (1-\theta_k)^k\right)\\
& - \left(\sum_{i = 2}^{q - 1} (x_1 - x_i) \cdot \sum_{k = q}^{\infty} \left(\frac{1}{k-1}  - \frac{1}{k}\right) \cdot (1-\theta_q)^k\right),
\end{align*}
{where the definition comes from taking $x_q = \dots = x_n$ in the function $R_q(\vec x)$, and we further take $n \to \infty$, as one can observe that after taking $x_q = \dots = x_n$, function $R_q$ is monotonely increasing when $n \geq q$ increases. Therefore, the function is maximized when $n \to \infty$.}
\end{definition}

Note that \Cref{lem:maximized_at_eq_new} guarantees that the maximum of $\rstar_q(\vec x)$ is at least the maximum of $\reg(\vec x)$. Therefore, we reduce the problem of finding the hardest instance for \Cref{alg:bopc} from a general $n$-parameters problem to a $q$-parameters problem. As we assumed that $n \geq 10$, it would always be feasible to consider $q = 2, 3, \cdots, 10$. We will further show in the following subsection that taking a constant $q$ is already sufficient to achieve a good regret upper bound.

\subsection{Best-Only Pricing Curve Beats Pricing Curve}
\label{sec:trivial_upper}

Now, we show the power of the relaxation we discussed in \Cref{sec:reg-relax}. We first show that even when $q = 2$, it already beats the $0.25$ lower bound of \Cref{alg:pc} via pricing curve.

\begin{theorem}
\label{thm:trivial_upper}
\Cref{alg:bopc} with pricing curve $f(t) = \exp(-t/c), c = 0.611$ has regret at most $0.23$.
\end{theorem}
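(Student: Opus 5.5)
The plan is to apply the relaxation of \Cref{sec:reg-relax} with $q = 2$. First I check the hypotheses. The curve $f(t) = \exp(-t/c)$ is continuous, strictly decreasing and convex, and $f(0) = 1$. So \Cref{obs:geq} applies, and we may restrict to instances with $x_1 \geq x_2 \geq \cdots \geq x_n \geq f(1) = e^{-1/c} \approx 0.195$. \Cref{lem:maximized_at_eq_new} then gives $\reg(\vec x) \leq \sup \rstar_2(\vec x)$. For $q = 2$ both subtracted sums in $\rstar_2$ are empty, so
\[
\rstar_2(\vec x) ~=~ x_1\theta_1 + (x_1 - x_2)\sum_{i=2}^{\infty}\frac{(1-\theta_2)^i}{i} ~=~ x_1\theta_1 + (x_1-x_2)\bigl(-\ln\theta_2 - (1-\theta_2)\bigr).
\]
Here the inverse function is explicit, $g(x) = -c\ln x$, so $\theta_1 = -c\ln x_1$ and $\theta_2 = -c\ln x_2$. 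The task therefore reduces to a two-variable inequality: for $e^{-1/c} \leq x_2 \leq x_1 \leq 1$,
\[
F(x_1,x_2) ~:=~ -c\,x_1\ln x_1 + (x_1-x_2)\,h(x_2) ~\leq~ 0.23, \qquad h(x_2) ~:=~ -\ln\theta_2 - 1 + \theta_2 ~\geq~ 0 .
\]

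Next I eliminate $x_1$. For fixed $x_2$, $F$ is strictly concave in $x_1$, since $\partial^2 F/\partial x_1^2 = -c/x_1 < 0$. Its stationarity condition $-c\ln x_1 - c + h(x_2) = 0$ gives $x_1^\ast = \exp\bigl(h(x_2)/c - 1\bigr)$, and the actual maximizer is $x_1^\ast$ clipped to the interval $[x_2, 1]$. At an interior optimum the value simplifies to $F = c\,x_1^\ast - x_2 h(x_2)$. This leaves a one-dimensional function $\Phi(x_2) := \max_{x_1 \in [x_2,1]} F(x_1, x_2)$ on $[e^{-1/c}, 1]$, and I will show $\Phi \leq 0.23$ for $c = 0.611$.

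For that I will split $[e^{-1/c}, 1]$ into a bulk part and an endpoint piece near $x_2 = 1$. On the bulk part I will evaluate $\Phi$ on a fine grid and bound its variation between grid points using explicit bounds on $h'(x_2)$. Near $x_2 = 1$, $\theta_2 \to 0$ and $h$ blows up like $\ln(1/\theta_2)$, but it is multiplied by $x_1 - x_2 \leq 1 - x_2 \approx \theta_2/c$. Hence $(x_1-x_2)h(x_2) = O(\theta_2\ln(1/\theta_2)) \to 0$, and on this piece $F$ is at most $\max_x(-c\,x\ln x) + o(1) = c/e + o(1) \approx 0.225 + o(1)$. The main obstacle I expect is making this rigorous and keeping the margin: near-worst cases appear to sit close to $0.23$. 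So the hard step is the certified numerical maximization, with a careful Lipschitz or monotonicity argument on each grid cell and an explicit $\theta_2\ln(1/\theta_2)$ estimate to close off the endpoint.
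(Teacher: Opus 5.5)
Your proposal is correct and follows the paper's proof. The paper likewise reduces to $\rstar_2(\vec x) = x_1\theta_1 + (x_1-x_2)\bigl(-\ln\theta_2 - (1-\theta_2)\bigr)$ and eliminates $x_1$ through the condition $-c\ln x_1 - c + h(x_2) = 0$, splitting on whether $h(x_2) > c$ (which forces $x_1 = 1$); it then states that both one-variable maxima are about $0.229$ "by numerical check". Your clipping at $x_1 = x_2$ and your certified grid and endpoint argument only make rigorous what the paper asserts numerically.
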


\begin{proof}
We prove \Cref{thm:trivial_upper} by showing that $\rstar_2(\vec x)$ for $f(t)$ defined in \Cref{thm:trivial_upper} is upper-bounded by $0.23$. By definition,
\begin{align*}
    \rstar_2(\vec x) ~&=~ x_1 \cdot \theta_1 + (x_1 - x_2) \cdot \sum_{i = 2}^\infty \frac{1}{i} \cdot (1 - \theta_2)^i \\
    ~&=~  x_1 \cdot \theta_1 + (x_1 - x_2) \left(-\ln{\theta_2} - (1 - \theta_2)\right),
\end{align*}
where the last equality uses the Taylor series of $\ln(1 - z) = -\sum_{k = 1}^\infty \frac{z^k}{k}$ for $z \in [0, 1)$.

Take $g(x)$ to be the inverse function of $f(t)$. Then, we have $g(x) = -c \cdot \ln x$. Plugging $\theta_1 = g(x_1)$ into the above equation and taking the partial derivative for $x_1$, we have
\begin{align*}
\frac{\partial \rstar_2(\vec{x})}{\partial x_1}~=~ & -c \ln{x_1} - c  -\ln{\theta_2} - (1 - \theta_2).
\end{align*}

Next, we discuss the following two cases:

\noindent \textit{Case 1: $- c  -\ln{\theta_2} - (1 - \theta_2) > 0$.} In this case, $\frac{\partial \rstar_2(\vec{x})}{\partial x_1}$ is non-negative. Therefore, it's sufficient to only consider $x_1 = 1$ (and therefore $\theta_1 = 1$, and we aim to solve
\begin{align*}
    \max_{x_2}  (1 - x_2) \left(-\ln{\theta_2} - (1 - \theta_2)\right), \quad \text{such that} \quad - c  -\ln{\theta_2} - (1 - \theta_2) > 0.
\end{align*}
Since the above optimization problem only involves single parameter, it's easy to numerically check that the problem is maximized when $x_2 \approx 0.7448$, and the corresponding objective is bounded by $0.229$.

\noindent \textit{Case 2: $- c  -\ln{\theta_2} - (1 - \theta_2) \leq 0$.} In this case, function $\rstar_2(\vec x)$ is maximized when $\frac{\partial \rstar_2(\vec{x})}{\partial x_1} = 0$, which gives 
\[
x_1 = \exp \left(\frac{- c  -\ln{\theta_2} - (1 - \theta_2)}{c}\right).
\]
Then, the problem becomes a single-parameter optimization problem, which aims at maximizing $\rstar_2(\vec x)$ with $x_1$ defining above under the constraint that $- c  -\ln{\theta_2} - (1 - \theta_2) \leq 0$. It's easy to numerically check that the problem is maximized when $x_2 \approx 0.3303$, the corresponding value of $x_1$ is approximately $0.4132$,  and the objective is also bounded by $0.229$.
\end{proof}

\Cref{thm:trivial_upper} suggests that taking $q=2$ is already sufficient to provide a relatively simple analysis for \Cref{alg:bopc} that gets a regret upper bound strictly better than $0.25$. Furthermore, by taking a slightly larger $q$ together with a computer-aided search to find the maximum value of $\rstar_q(\vec x)$, we are able to give an improved upper bound for the regret of \Cref{alg:bopc}. Formally, we give the following theorem:

\label{sec:upper_bound}
\begin{restatable}{theorem}{mainupper}
\label{thm:boupper}
The best-only pricing curve with $f(x) = \exp(-x/c), c = 0.472$ has regret $\leq 0.190$.
\end{restatable}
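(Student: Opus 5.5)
The plan is to follow the template of \Cref{thm:trivial_upper}, but with a larger $q$ (say $q = 4$ or $5$, in any case at most $10$, which is permitted since \Cref{obs:geq} lets us assume $n \geq 10$). First I check that $f(t) = \exp(-t/0.472)$ meets the hypotheses used so far: it is continuous, strictly decreasing, convex, and $f(0)=1$. So \Cref{obs:geq}, \Cref{lem:bopc_eq} and \Cref{lem:maximized_at_eq_new} all apply, giving
\[
\reg(\vec x) \leq R_q(\vec x) \leq \rstar_q(x_1,\dots,x_q).
\]
It therefore suffices to show $\sup \rstar_q \leq 0.190$ over $1 \geq x_1 \geq \dots \geq x_q \geq f(1)$, with $\theta_i = -c\ln x_i$.

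Next I put $\rstar_q$ in closed form so that it becomes an explicit function of $q$ variables. The infinite sums collapse. Writing $y = 1-\theta_q$,
\[
\sum_{i\ge q} \frac{y^i}{i} = -\ln(1-y) - \sum_{i<q} \frac{y^i}{i},
\]
and $\sum_{k\ge q}\bigl(\frac1{k-1}-\frac1k\bigr)y^k$ is handled the same way, since $\sum_k \frac{y^k}{k-1} = y\cdot(-\ln(1-y))$. The result is a smooth expression in $x_1,\dots,x_q$ involving only logarithms and polynomials.

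To reduce dimension as in the $q=2$ proof, I use that $\partial \rstar_q/\partial x_1$ has the form $-c\ln x_1 - c + (\text{terms independent of } x_1)$. The first part is decreasing in $x_1$, so for fixed $x_2,\dots,x_q$ the maximizer over $x_1$ is either $x_1=1$ or the unique stationary point, which has an explicit exponential formula. This eliminates $x_1$ and leaves a $(q-1)$-dimensional box optimization over the ordered $x_2 \geq \dots \geq x_q$.

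The main obstacle is certifying this $(q-1)$-dimensional maximum rigorously rather than only numerically. I would run a grid or branch-and-bound search. On each cell of the ordered region, I bound $\rstar_q$ from above by evaluating at cell corners plus a Lipschitz/derivative correction. These derivative bounds are available from the explicit formulas, because $(1-\theta)^k$ and $\ln$ terms are monotone in each $x_i$. Separating the positive monotone pieces from the negative ones gives a valid interval bound: plug the worst corner into each piece separately. Cells near $x_q \to f(1)$, where $\theta_q \to 1$ and the logarithm blows up, need separate care. Since $\theta_q \leq 1$ exactly, $-\ln\theta_q$ stays bounded only away from $\theta_q = 0$. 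So the actual delicate end is $x_q \to 1$, i.e.\ $\theta_q\to 0$, and there I use the coefficient $(x_1-x_q)\to 0$ to control the product. If $q=4$ does not already give $0.190$, I increase $q$ up to the allowed bound until the certified maximum drops below $0.190$.
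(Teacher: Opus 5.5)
Your proposal is correct and follows essentially the same route as the paper. You check that $f(t)=\exp(-t/0.472)$ meets the hypotheses, reduce via \Cref{obs:geq}, \Cref{lem:bopc_eq} and \Cref{lem:maximized_at_eq_new} to bounding $\sup \rstar_q$, collapse the series into logarithmic closed form, and certify the maximum by a branch-and-bound with derivative bounds. The differences are minor: the paper fixes $q=5$ and keeps $x_1$ in a $5$-dimensional bisection search with global derivative bounds (\Cref{lem:lip}, using $\partial \rstar_5/\partial x_1 \ge -c$), whereas you eliminate $x_1$ analytically by concavity and use cell-wise interval bounds, which is equally valid (taking the stationary point over all of $(0,1]$ rather than $[x_2,1]$ only loosens the upper bound).
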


The proof of \Cref{thm:boupper} takes $q = 5$.
We defer the proof of \Cref{thm:boupper} to \Cref{app:main}.

Finally, we also provide a lower bound for the class of best-only pricing curves:

\begin{restatable}{theorem}{bopclower}
\label{thm:bolower}
No best-only pricing curve (\Cref{alg:bopc}) can achieve a regret less than $0.171$.
\end{restatable}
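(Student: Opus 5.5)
Since this is a lower bound over an entire class of algorithms, the plan is a small family of hard instances parametrized by a few real numbers, chosen so that every threshold function $f$ must incur regret at least $0.171$ on at least one of them. The pricing-curve lower bound in \Cref{thm:pc_lower} is the template. There, a single value $0.5$ and the family ``one $1$ plus many $0.5$'s'' already forced $0.25$. For best-only curves the second instance is weaker, because the running maximum acts as a second threshold, so I would take a richer family. Without loss of generality I restrict to the quantities of $f$ that actually matter, namely its inverse-type hitting times $g(x)=\inf\{t: f(t)\le x\}$ at finitely many levels. I do not assume monotonicity, continuity, or $f(0)=1$: for a general measurable $f$ I would replace $g(x)$ by the measure-type quantities $\mu\{t\le s: f(t)\le x\}$. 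The main case, though, is the monotone one, and I would first argue that replacing $f$ by its lower-monotone envelope can only help the algorithm on the instances used.

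The instances I would use are as follows. (i) A single value $x$, with regret $x\cdot g(x)$. This forces $g(x)\le 0.171/x$ for all $x$, i.e.\ the curve must drop fast. (ii) Two-level instances: $x_1=a$ and $m$ copies of $b<a$ with tiny noise, $m\to\infty$. For these, \Cref{lem:bopc_eq} specializes, as in the computation of $\rstar_2$ in \Cref{thm:trivial_upper}, to
\[
\reg = a\,\theta_1 + (a-b)\bigl(-\ln\theta_b-(1-\theta_b)\bigr),
\]
with $\theta_1=g(a)$ and $\theta_b=g(b)$. Here the lemma is exact, since only the dropped terms with $i<q=2$ would matter and there are none. (iii) If needed, three-level instances with $q=3$, evaluated exactly via \Cref{lem:bopc_eq} (or $\rstar_3$ with equality in the limit).

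The key tension is this. Instance (i) at level $b$ wants $g(b)$ small. Instance (ii) with $a=1$ wants $-\ln g(b)-(1-g(b))$ small, i.e.\ $g(b)$ large, and it also wants $g(1)$ small. Choosing $b$ near $0.6$--$0.7$ (cf.\ the maximizer $x_2\approx0.745$ in \Cref{thm:trivial_upper}) gives two opposing constraints on the single number $\theta=g(b)$:
\[
\max\Bigl(b\theta,\;(1-b)\bigl(-\ln\theta-1+\theta\bigr)\Bigr).
\]
I would minimize this over $\theta$ and then maximize over $b$. If this one-level argument falls short of $0.171$, I would couple two levels $b_1>b_2$ via instances $(a=b_1,\text{ many } b_2)$, which link $g(b_1)$ and $g(b_2)$ through the term $a\theta_1$. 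The result is a finite min–max over a vector $(g(y_1),\dots,g(y_r))$ on a grid $y_1>\dots>y_r$, with $g$ monotone. That min–max I would solve or bound numerically, certifying the bound via a discretization with Lipschitz error control.

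The main obstacle is the final step. I have to show that for every nonincreasing vector of hitting times, some instance in the chosen finite family has regret at least $0.171$, and do so rigorously rather than only for exponential curves. I would first try to find the right small grid by computing the optimal $f$ against the family numerically. I would then prove the bound by a case split on $g$ at two or three key levels, using monotonicity of the regret expressions in each $\theta$ to reduce each case to checking a corner.
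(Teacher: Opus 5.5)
\textbf{The gap: your formula for instances (ii) is an upper bound, not the exact regret.} The expression $a\theta_1+(a-b)\bigl(-\ln\theta_b-(1-\theta_b)\bigr)$ is $\rstar_2$. The relaxation $R_q$ keeps the negative double sum of \Cref{lem:bopc_eq} only for $2\le i\le q-1$ and drops it for $i\ge q$. So for $q=2$ \emph{every} negative term is dropped, not none. For $x_2=\dots=x_n=b$ the dropped part equals $(a-b)\sum_{k\ge3}\frac{k-2}{k(k-1)}(1-\theta_b)^k>0$. The exact limiting regret (for $\theta_1\le\theta_b$) is
\[
\reg=a\theta_1+(a-b)\sum_{k\ge2}\frac{(1-\theta_b)^k}{k(k-1)}=a\theta_1+(a-b)\,h(\theta_b),\qquad h(\theta):=1-\theta+\theta\ln\theta .
\]
You can check this directly: some copy of $b$ is taken iff the largest copy arriving before $t_1$ arrives after $\theta_b$, which has probability $(1-\theta_b/t_1)^+$. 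Since $\rstar_2\ge\reg$, it cannot certify a lower bound. The same misreading affects your suggestion to use $\rstar_3$ for three-level instances.

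This error changes the outcome. With your expression, the one-level min--max over $\theta$ and $b$ comes out near $0.189$, so you would stop there. With the correct $h$ it is only about $0.159$; for example, at $b=0.5$ the choice $\theta_b\approx0.318$ balances $b\theta_b=(1-b)h(\theta_b)$. So the argument as written does not prove the theorem.

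\textbf{Your fallback coupling does work once the exact formula is used.} Note that $h$ is decreasing and that $\theta_{0.6}\le\theta_{0.45}$ for a monotone curve. Suppose the regret is at most $0.171$ on the three instances $(0.45)$, $(1;\text{many }0.6)$ and $(0.6;\text{many }0.45)$. The first forces $\theta_{0.45}\le0.38$. The second forces $h(\theta_{0.6})\le0.4275$, i.e.\ $\theta_{0.6}\ge0.233$. Then the third instance has regret at least $0.6\cdot0.233+0.15\,h(0.38)\approx0.140+0.038>0.171$, a contradiction.

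\textbf{Comparison with the paper.} The paper uses the four nested single-copy instances $(1,0.61,0.48,0.44)$, $(0.61,0.48,0.44)$, $(0.48,0.44)$, $(0.44)$. It evaluates them exactly with \Cref{lem:bopc_eq} and numerically optimizes the four hitting times to get $0.1712$. Your corrected many-copies coupling replaces that numerical search with a hand-checkable monotonicity argument, with some slack to spare.

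\textbf{A secondary point.} Your claim that passing to a monotone envelope ``can only help'' is unproved. The lower envelope enlarges acceptance sets, which hurts on $(1;\text{many }b)$. The paper's proof likewise handles only curves described by their hitting times, so this gap is shared.
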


We defer the proof of \Cref{thm:bolower} to \Cref{app:main}.

\section{Regret Lower Bound for General Algorithms}
\label{sec:lower}

In this section, we provide a general regret lower bound of the secretary problem for all algorithms. To be specific, we prove the following:

\begin{restatable}{theorem}{lowermain}
\label{thm:lower_main}
For secretary problem with regret minimization as the objective, no algorithm can achieve a regret less than $0.152$.
\end{restatable}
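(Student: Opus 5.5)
We prove \Cref{thm:lower_main} by a minimax (Yao-type) argument: we fix a finite family of candidate instances and show that every algorithm, including randomized and exponential-time ones, has worst-case regret at least $0.152$ on that family. Because the adversary may choose the values freely, it suffices to exhibit a small set of instances $\vec x^{(1)},\dots,\vec x^{(m)}$. We then bound
\[
\min_{\alg}\max_{j}\reg(\vec x^{(j)}) \;\ge\; \min_{\alg}\sum_j \lambda_j \reg(\vec x^{(j)})
\]
for a suitable distribution $\lambda$ over instances. The right-hand side is a Bayesian problem in which the algorithm knows the prior over instances and sees values in random order, so its optimum is computable by backward induction.

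\textbf{Step 1: reduce the algorithm's structure.} Once the instance family is fixed, the algorithm's relevant state is the multiset of values seen so far together with the current time $t$. By the time-rescaling remark in \cref{sec:prelim}, we may work in continuous time. The multiset determines the posterior over which instance is being played and over the remaining values. We would show that an optimal algorithm only ever accepts a value that is a current maximum (``best-only''). We would also show that its acceptance rule, for each history type, is a threshold in time: accept a value of level $v$ arriving at time $t$ iff $t\ge \tau(\text{history},v)$. Both claims should follow from an exchange argument. Stopping on a non-maximum is dominated, since the known larger value is gone. Once stopping is optimal at time $t$, it stays optimal later, because the continuation value decreases in $t$. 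After this step the algorithm is described by finitely many threshold times, one per reachable (history, value) pair.

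\textbf{Step 2: choose the hard family.} Guided by the examples already in the paper, we would take instances mixing a few value levels. The first ingredient is a single small value, as in Instance (1) of \Cref{thm:pc_lower}, which punishes waiting. The second is a top value $1$ together with many copies of an intermediate value $a$, which punishes early acceptance. The third is a variant with a top value $b$ and copies of a lower value, which punishes over-waiting after seeing $a$'s. The number of copies is sent to infinity, where arrivals become Poisson-like and the probabilities take closed forms such as $(1-t)^k$ and the $-\ln\theta$ expressions in \Cref{thm:trivial_upper}. The levels $a,b,\dots$ and the weights $\lambda_j$ would be tuned numerically.

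\textbf{Step 3: solve and certify.} We compute the optimal thresholds for the Bayesian problem by backward induction over the finitely many history states. In each state the expected payoff is an explicit function of the thresholds, so this is a low-dimensional optimization. We then verify rigorously, via interval arithmetic or a fine grid combined with Lipschitz bounds, that the optimal Bayesian regret is at least $0.152$. The main obstacle is Step 1. In the limit of many copies the history is not a finite object, so we must argue that only a bounded summary matters: the maximum seen so far, and which distinguishing levels have appeared. The count of $a$'s is essentially determined by $t$. If this reduction proves delicate, the fallback is to keep the number of copies $n$ finite but moderate, making the state space genuinely finite, and to solve the resulting exact dynamic program by computer, accepting a slightly weaker constant that should still clear $0.152$.
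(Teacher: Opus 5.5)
Your outer framework (Yao's principle on a finite family, exact Bayesian backward induction, then computer certification) is the same as the paper's. However, two of your steps would fail.

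\textbf{Step~1: the best-only claim is false.} Under the regret objective, once a larger value has been passed, a smaller one can still be worth taking, because continuing may pay off only if an even larger value is still to come. The paper's own optimal rule shows this. In its instance, $A^*(\{b\},c,i)=1$ for $i\ge 64026$: after skipping $b=0.59$, it accepts $c=0.38$ late. Skipping $c$ at that point yields $1$ only under $\I_1$ with $a$ not yet arrived (and $0$ otherwise), and that posterior has dropped below $0.38$. Your exchange argument (``the known larger value is gone'') is the probability-of-selecting-the-best intuition; it does not apply to expected value. If you optimize only over best-only threshold rules, you compute the optimum of a strict subclass. That number is at least the Bayesian optimum, so it certifies nothing about general algorithms. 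The paper imposes no such structure. It runs the full dynamic program over states $(S,v,i)$, where $S$ is the set of nonzero values already seen. It uses only two facts: a $0$ is never worth taking before the last slot, and by exchangeability the posterior over instances depends only on $(S,v,i)$. The threshold-in-time property is likewise never needed.

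\textbf{Step~2: the hard family points the wrong way.} Against a general algorithm, a level with many copies shows up almost at time $0$. With probability tending to $1$, it also appears in every interval $[1-\epsilon,1]$. So it immediately reveals which group of instances you are in, and it serves as a free fallback at the end. It therefore only imposes a floor and rescales the problem to a shorter range, which can only weaken the bound. For instance, if your ingredients are $\{s\}$, $\{1,a,\dots,a\}$ and $\{b,a,\dots,a\}$ with $s\neq a$, the rule ``accept the first value different from $a$'' has regret $0$ on all three. The formulas $(1-t)^k$ and $-\ln\theta$ describe the worst case for best-only pricing curves (\Cref{lem:maximized_at_eq_new}), not for arbitrary algorithms.

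The missing idea is the paper's construction: nested instances with a \emph{single} copy of each nonzero value. These are $\{1,0.59,0.38\}$, $\{0.59,0.38\}$ and $\{0.38\}$, with probabilities $0.46$, $0.27$ and $0.27$. Each is padded with zeros to a common length $M=10^5$, so neither $n$ nor the arrival pattern identifies the instance. Seeing $0.38$ or $0.59$ then never tells you whether larger values are still coming, and the exact dynamic program yields Bayesian regret at least $0.1529$. As written, your proposal leaves the family and the numbers unspecified (``tuned numerically'', ``should still clear $0.152$''), so it does not establish the bound.
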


Our main idea of proving \Cref{thm:lower_main} is as follows: We first apply Yao's minimax principle, which states that the worst-case regret of any (possibly random) algorithm is at least the optimal regret of a deterministic algorithm on a random input. Next, we show the following \Cref{ex:lower} gives the hard random input we desire:

\begin{restatable}{example}{exlower}
\label{ex:lower}
Let $V = \{a, b, c, 0\}$, where $a = 1$, $b = 0.59$, and $c = 0.38$. The random \cref{ex:lower} is constructed in the following way, with $M := 10^5$ being a sufficiently large integer.
\begin{itemize}
    \item With probability $p_1 = 0.46$, the input is $\I_1$, which contains $V_1 = \{a, b, c\}$ and $M - 3$ values of $0$.
    \item With probability $p_2 = 0.27$, the input is $\I_2$, which contains $V_2 = \{b, c\}$ and $M - 2$ values of $0$.
    \item With probability $p_3 = 0.27$, the input is $\I_3$, which contains $V_3 = \{c\}$ and $M - 1$ values of $0$.
\end{itemize}
\end{restatable}

As \Cref{ex:lower} is explicitly given, our ideas of proving \Cref{thm:lower_main} is to show that the optimal algorithm for \Cref{ex:lower} has the form of \Cref{alg:oai}, and further showing that the regret of \Cref{alg:oai} is at least $0.152$.

\begin{algorithm}[tbh]
\caption{Optimal Algorithm for $\I$}
\label{alg:oai}
Let $A^*(S, v, i) \colon 2^V \times V \times [M] \to \{0, 1\}$ be the optimal decision function \\
Initialize $S \gets \varnothing$ \\
\For{$i = 1 \to M - 1$}
{
    Observe the $i$-th arrival $y_i$ \\
    \lIf{$y_i = 0$}{\textbf{continue}}
    \uIf{$A^*(S, y_i, i) = 1$}
    {Pick $y_i$ and \textbf{exit}}
    \Else{Skip $y_i$ and update $S \gets S \cup \{y_i\}$}
}
Pick $y_M$ if the algorithm has not picked any value
\end{algorithm}

\begin{proof}[Proof of \cref{thm:lower_main}]

The main idea of our proof is to show that the optimal decision function $A^*(S, v, i)$ can be calculated via a backward dynamic programming. Then, the optimality of $A^*(S, v, i)$ guarantees \Cref{alg:oai} is the optimal algorithm for \Cref{ex:lower}.

    We first introduce the following two notations:
    \begin{itemize}
        \item $E(k, S, i)$: we use $E(k, S, i)$ to represent the expected outcome if we run \Cref{alg:oai} starting from the $i$-th arrival, assuming the input of the instance is $\I_k$, subset $S \subseteq V_k$ has arrived before the $i$-th arrival, and the remaining $V_k \setminus S$ will arrive at the arrival slot $i, i+1, \ldots, M$ uniformly at random.  We note that $E(k, S, i)$ is not well defined when $S$ is not a subset of $V_k$, or $|V_k \setminus S| > M-i+1$. For these cases, we set the value of $E(k, S, i)$ be $-\infty$ to mark that this status is invalid.\\
        We will show with induction that the value of $E(k, S, i)$ can be computed explicitly.
        \item $P_k^{S,v,i}$: we define 
        \begin{align}
        \label{eq:cond}
        P_k^{S,v,i} := \left\{ 
        \begin{aligned}
            &  p_k \cdot \frac{P(i-1, |S|) \cdot P(M-i, |V_k| - |S| - 1)}{P(M, V_k)} & \qquad & S \cup \{v\} \subseteq V_k\\
            & 0 &\qquad  & \text{otherwise}
        \end{aligned}
        \right.
        \end{align}
        to be the probability that the following events happen simultaneously: the input for \Cref{alg:oai} is $\I_k$; subset $S \subseteq V_k$ has arrived before the $i$-th arrival; value $v \in V_k$ is the $i$-th arrival.
    \end{itemize}
    
    Now we use backward induction to show \cref{alg:oai} is the optimal algorithm for \cref{ex:lower}. Simultaneously, we show the optimal $A^*(S, v, i)$ is defined by a backward dynamic programming process.
    
    The base case is $i = M$. In this case, the optimal strategy is to take any arriving value, i.e., setting $A^*(S, v, M) = 1$ for every $S$ and $v$ is sufficient. Besides, we have $E(k, S, M) = v$ when $|V_k \setminus S| = 1$, where $v$ is the only value in $V_k \setminus S$, and $E(k, S, M) = 0$ when $S = V_k$.

    For the induction step, we consider the optimal strategy at the $i$-th arrival, assuming the optimal algorithm for \cref{ex:lower} follows \Cref{alg:oai} for arrival slots $i+1, i+2, \ldots, M$. We also assume value $E(k, S, j)$ is computable for $j = i+1, i+2, \ldots, M$. 

    We first decide $A^*(S, 0, i)$. Note that there is no incentive to pick value $0$ when $i \neq M$. Therefore, we set $A^*(S, 0, i) = 0$ for all $S$. Now consider the case that a value $v \neq 0$ arrives at slot $i$. There are two possible options. If we take $v$,  the algorithm gets value $v$. On the other hand, if we skip $v$, the fact that subset $S \cup \{v\}$ has arrived gives a conditional distribution for the future arrivals, and therefore the expected outcome of \cref{alg:oai} in the future is computable. Specifically, if we skip $v$ at slot $i$, the expected outcome of \cref{alg:oai} is exactly
    \begin{align}
    \label{eq:future}
        \sum_{k = 1}^3 \frac{P_k^{S,v,i}}{\sum_{j \in[3]} P_j^{S,v,i}} \cdot E(k, S \cup \{v\}, i).
    \end{align}
    With the help of \cref{eq:future}, the optimal strategy for the $i$-th arrival is clear: if a value $v \in \{a, b, c\}$ becomes the $i$-th arrival, we take it and set $A^*(S, v, i) = 1$ when $v$ is greater than \cref{eq:future}; otherwise, the optimal strategy is to skip $v$, and we set $A^*(S, v, i) = 0$ in this case. 

    Finally, we show the value of $E(k, S, i)$ is computable to finish the induction step. The value of $E(k, S, i)$ is explicitly given by the following expression:
    \begin{align*}
        E(k, S, i) = \sum_{v \in V_k \setminus S} \frac{1}{M - i + 1} \cdot \left(\one[A^*(S, v, i) = 1] \cdot v + \one[A^*(S, v, i) = 0] \cdot E(k, S \cup \{v\}, i + 1) \right).
    \end{align*}
    The backward induction shows that we can determine the optimal algorithm for \cref{ex:lower} by calculating the value of $A^*(S, v, i)$ and $E(k, S, i)$ step-by-step via a backward dynamic programming from $i = M$ to $i = 1$. Therefore, \cref{alg:oai} is the optimal algorithm for \cref{ex:lower}. Following the calculation procedure described above, we find the optimal decision function $A^*(S, v, i)$ as follows:
    \begin{itemize}
        \item $A^*(S, a, i) = 1$ for all $S \not \owns a$, $i \in [M]$.
        \item $A^*(S, v, i) = 0$ for all $S \ni a$, $v \in \{b, c\}$, $i \in [M]$.
        \item $A^*(\varnothing, b, i) = 1$ for $i \in [29396, M]$, while $A^*(\varnothing, b, i) = 0$ for $i \in [1, 29395]$.
        \item $A^*(\varnothing, c, i) = 1$ for $i \in [40051, M]$, while $A^*(\varnothing, c, i) = 0$ for $i \in [1, 40050]$.
        \item $A^*(\{b\}, c, i) = 1$ for $i \in [64026, M]$, while $A^*(\{b\}, c, i) = 0$ for $i \in [1, 64025]$.
        \item $A^*(\{c\}, b, i) = 1$ for $i \in [15538, M]$, while $A^*(\{c\}, b, i) = 0$ for $i \in [1, 15537]$.
    \end{itemize}
    By using computer-aided calculation, it can be further computed that the regret for \cref{ex:lower} with this optimal function $A^*(S, v, i)$ has regret at least $0.1529$, which finishes the proof of \Cref{thm:lower_main}.
\end{proof}

\begin{small}
\bibliographystyle{alpha}
\bibliography{ref.bib,bib.bib}
\end{small}

\appendix

\section{Further Results with Regret Minimization Objective}
\label{sec:revenue}
So far, we have discussed regret minimization for the secretary problem. In this section, we further discuss other combinatorial models with regret minimization as the objective.

\subsection{Multiple-Choice Secretary Model}

We first extend the discussed single-choice secretary model to the multiple-choice secretary variant, in which the algorithm is allowed to choose $k$ values among $n$ sequentially arriving numbers, and the regret minimization objective refers to the difference between the sum of top-$k$ values and the sum of the chosen values, under the assumption that each arriving value falls between $[0, 1]$. For the multiple-choice secretary model, when the objective is to maximize the competitive ratio, \cite{Kleinberg-SODA05} gives the following result.

\begin{proposition}[\cite{Kleinberg-SODA05}]
    For the multiple-choice secretary problem with the objective of optimizing the competitive ratio, there exists a $(1 - O(k^{-1/2}))$-competitive algorithm.
\end{proposition}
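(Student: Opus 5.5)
The plan is to prove this proposition of \cite{Kleinberg-SODA05} by the standard recursive sample-and-threshold algorithm, and to track the additive loss, since the ratio bound follows from it. Write $\mathrm{OPT}_k$ for the sum of the top $k$ values. The algorithm $\mathcal{A}(k, n)$ works as follows.
\begin{itemize}
\item If $k = 1$, run the classical $1/e$ rule.
\item Otherwise, draw $m \sim \mathrm{Bin}(n, 1/2)$ and run $\mathcal{A}(\lfloor k/2 \rfloor, m)$ recursively on the first $m$ arrivals.
\item Let $\tau$ be the $\lfloor k/2 \rfloor$-th largest value seen among those first $m$ arrivals.
\item On the remaining $n - m$ arrivals, greedily accept every value at least $\tau$ until $k$ values in total have been chosen.
\end{itemize}

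The key steps, in order, are these.

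First, the recursive half. The first $m$ arrivals form a uniformly random half of the values. So the expected sum of the top $\lfloor k/2\rfloor$ values in that half is at least $\tfrac12 \mathrm{OPT}_k$ minus a lower-order term. By induction, the recursive call recovers all of this except a $(1-O(k^{-1/2}))$ factor.

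Second, the greedy second half. Let $T$ be the top-$k$ set.
\begin{itemize}
\item The number of elements of $T$ landing in the first half is $\mathrm{Bin}(k,1/2)$. It therefore deviates from $k/2$ by $O(\sqrt{k})$ in expectation.
\item Hence $\tau$ lies, with good probability, between the $(k - O(\sqrt{k}))$-th and $k$-th largest values overall.
\item By Chebyshev, the number of second-half values at least $\tau$ is $k/2 \pm O(\sqrt{k})$.
\end{itemize}
It follows that the greedy phase collects all but $O(\sqrt{k})$ of the top-$k$ elements in the second half. Each missed or wrongly taken element costs at most the value of one element of $T$, roughly $\mathrm{OPT}_k/k$ on average in a ratio sense. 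The second-half loss is thus $O(\sqrt{k}) \cdot \mathrm{OPT}_k/k = O(k^{-1/2})\mathrm{OPT}_k$.

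Third, combine the two halves. Let $\epsilon(k)$ denote the relative loss. The bounds above give the recurrence
\[
\epsilon(k) \le \tfrac12\epsilon(k/2) + O(k^{-1/2}),
\]
which solves to $\epsilon(k) = O(k^{-1/2})$.

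The main obstacle is the second step, because the loss must be charged relative to $\mathrm{OPT}_k$ rather than in absolute terms. A missed element can be large even when most top values are small, so the plan is to argue per rank. The $r$-th largest element of $T$ is collected whenever it lands in the second half and the greedy budget is not exhausted before it arrives. The probability of this failing is bounded via concentration of the rank of $\tau$ by $O(k^{-1/2})$, uniformly for all ranks $r \le k - C\sqrt{k}$. The remaining $O(\sqrt{k})$ lowest-ranked elements of $T$ each have value at most the average of the top $k$. Together they therefore contribute at most $O(\sqrt{k}/k)\,\mathrm{OPT}_k$.
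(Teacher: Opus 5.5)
The paper gives no proof of this proposition; it simply quotes \cite{Kleinberg-SODA05}. Your algorithm is Kleinberg's recursive one, so the only question is whether your analysis closes. It does not, and the failure is at exactly the step you flag as the main obstacle.

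\textbf{The uniform per-rank bound is false.} You claim the rank-$r$ element of $T$ is lost with probability $O(k^{-1/2})$, uniformly for $r \le k - C\sqrt{k}$. This fails for any constant $C$. Let $\ell=\lfloor k/2\rfloor$ and let $R$ be the overall rank of $\tau$.
A second-half element of rank $r$ lies below $\tau$ iff $R<r$. That happens iff at least $\ell$ of the $r-1$ higher-ranked elements fell in the first half. For $r = k - C\sqrt{k}$, this asks a $\mathrm{Bin}(r-1,1/2)$ variable to exceed its mean by about $C$ standard deviations, which has constant probability.
Concentration gives $O(k^{-1/2})$ only once $r \le k - \Theta(\sqrt{k\log k})$. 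The leftover low-ranked block then has $\Theta(\sqrt{k\log k})$ elements, so your split proves only $1 - O(\sqrt{\log k/k})$.

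Your bullet placing $\tau$ between the $(k-O(\sqrt{k}))$-th and $k$-th largest values is also off. In fact $R>k$ with probability about $1/2$, and then elements outside $T$ are accepted and consume the budget.

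\textbf{How to repair it.} Treat the two failure modes separately instead of bounding rank by rank.

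(i) \emph{Below-threshold misses.} A top-$k$ element missed for lying below $\tau$ has value at most $\tau \le x_\ell \le \mathrm{OPT}_k/\ell$, because $\tau$ is the $\ell$-th largest value of a subset. The expected number of such misses is at most $\E[(k-R)^+]$. Here $R-\ell$ is negative binomial (failures before the $\ell$-th success at rate $1/2$), with mean $\ell$ and variance $2\ell$. Hence $\E[(k-R)^+] \le 1+\sqrt{2\ell} = O(\sqrt{k})$, and this loss is $O(k^{-1/2})\,\mathrm{OPT}_k$ with no per-rank argument at all.

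(ii) \emph{Budget exhaustion.} This is where a large element genuinely can be lost, but the bound is uniform in rank for a different reason. Condition on the number $N$ of second-half elements above $\tau$ and on the remaining budget $B \ge k-\ell$. The random order within the second half then cuts off any particular such element with probability $(N-B)^+/N \le (N-B)^+/(k-\ell)$. Moreover $\E[(N-B)^+] \le \E[(N-\ell)^+] \le \sqrt{2\ell}$.
Conditioning on $x_r$ lying in the second half above $\tau$ costs only a constant factor, since for $r\le k$ that event has probability bounded below by a constant. You asserted this part without argument, so it needs to be added.

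With (i) and (ii), the second-phase loss is $O(k^{-1/2})\,\mathrm{OPT}_k$ and your recurrence closes.
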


Note that when each arriving value falls between $[0, 1]$, the sum of top-$k$ values is at most $k$. Therefore, as a direct corollary, the $(1 - O(k^{-1/2}))$-competitive algorithm immediately gives an $O(\sqrt{k})$ regret bound.

\begin{theorem}
    For the multiple-choice secretary problem with regret minimization as the objective, there exists an algorithm with regret of $O(\sqrt{k})$.
\end{theorem}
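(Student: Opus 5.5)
The plan is to obtain this as an immediate corollary of the proposition of \cite{Kleinberg-SODA05} stated just above, converting its multiplicative guarantee into an additive one using the normalization $x_i \in [0,1]$.

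Concretely, fix $k$ and let $\alg$ be the $(1 - O(k^{-1/2}))$-competitive algorithm from that proposition. That is, there is an absolute constant $C$ with
\[
\E[\alg(\vec x)] \geq (1 - C k^{-1/2}) \cdot \mathrm{OPT}_k(\vec x) \quad \text{for every instance } \vec x,
\]
where $\mathrm{OPT}_k(\vec x)$ denotes the sum of the top-$k$ values. We need to note explicitly that this guarantee holds for every fixed (adversarial) value vector under uniformly random order. This is the model of \cite{Kleinberg-SODA05}, so the guarantee is worst case over $\vec x$, which is exactly what the regret's supremum requires.

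Rearranging gives
\[
\reg(\vec x) = \mathrm{OPT}_k(\vec x) - \E[\alg(\vec x)] \leq C k^{-1/2} \cdot \mathrm{OPT}_k(\vec x).
\]
Since every value lies in $[0,1]$, we have $\mathrm{OPT}_k(\vec x) \leq k$, so $\reg(\vec x) \leq C\sqrt{k}$. Taking the supremum over $\vec x$ gives regret $O(\sqrt{k})$.

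The only subtlety is the small-$k$ regime. If $C k^{-1/2} > 1$, the competitive bound is vacuous. However, the regret is trivially at most $\mathrm{OPT}_k \leq k \leq C\sqrt{k}$ when $k \leq C^2$, so the $O(\sqrt{k})$ bound holds uniformly in $k$. I do not anticipate any real obstacle. The argument is a one-line scaling, and the only care needed is to confirm that the cited guarantee is instance-wise rather than averaged over a prior on values.
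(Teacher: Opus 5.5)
Your proposal is correct and follows the paper's argument: both apply the $(1-O(k^{-1/2}))$-competitive algorithm of \cite{Kleinberg-SODA05} and use that the sum of the top-$k$ values is at most $k$, giving regret $O(\sqrt{k})$. Your extra remarks, that the guarantee is instance-wise under random order and that small $k$ is handled trivially by $\mathrm{OPT}_k \leq k$, are just more explicit versions of what the paper treats as immediate.
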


\Cref{thm:multiple-lower} below provides a matching lower bound for the multiple-choice secretary problem.

\begin{restatable}{theorem}{kcopieslower}
\label{thm:multiple-lower}
    For the multiple-choice secretary problem with regret minimization as the objective, every algorithm has $\Omega(\sqrt{k})$ worst-case regret.
\end{restatable}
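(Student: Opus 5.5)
The plan is to prove \Cref{thm:multiple-lower} with a randomized hard input and Yao's minimax principle: it suffices to exhibit a distribution over instances on which every deterministic algorithm has expected regret $\Omega(\sqrt{k})$. The hard distribution encodes an unknown count of high values, which the algorithm can only estimate with error of order $\sqrt{k}$. Concretely, take $n = 2k$ values (or $n = Ck$ for a constant $C$). Each instance consists of $m$ values equal to $1$ and $n - m$ values equal to $0$, with small tie-breaking noise. The count $m$ is drawn uniformly from $\{k - \lceil\sqrt{k}\rceil, \ldots, k\}$. The offline optimum is $\min(m, k) = m$.

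The key steps are as follows.
\begin{enumerate}
\item \emph{Structure of an optimal online policy.} Any policy's value equals $m$ minus its \emph{waste}: the number of zeros it accepts plus the number of ones it leaves unaccepted. The algorithm must accept at most $k$ values in total. The state at any time is (number of ones seen, number of zeros seen, number accepted). We reduce to a cleaner continuous-time comparison: each value arrives at an independent $U[0,1]$ time.
\item \emph{Identify the source of error.} The number of ones that arrive by time $1/2$ is distributed as $\mathrm{Bin}(m, 1/2)$. Its standard deviation is $\Theta(\sqrt{k})$, which is comparable to the spread of $m$ over the prior. Consequently, at time $1/2$ the posterior of $m$ given the history still has standard deviation $\Omega(\sqrt{k})$ with constant probability.
\item \emph{Charge the waste.} Consider the algorithm's state at time $1/2$. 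Either it has used up more than $k - (\text{expected remaining ones}) + c\sqrt{k}$ of its budget, so with constant probability it cannot accommodate the remaining ones and misses $\Omega(\sqrt{k})$ of them. Or it has used less, so it must later accept zeros or leave its budget unused. Neither branch is free, because accepting a one is always correct. What we really need is a comparison against $m$, and here the issue is the ones arriving after the decision point: the remaining number of ones, $m - (\text{seen})$, has posterior spread $\Theta(\sqrt{k})$, while the algorithm's remaining capacity $k - (\text{accepted})$ is fixed.
\end{enumerate}

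There is a catch. If the algorithm simply accepts every one, it never exceeds capacity, because $m \le k$. The distribution therefore has to be modified so that $m$ ranges over $\{k - \sqrt{k}, \ldots, k + \sqrt{k}\}$ and the ones carry slightly different values. We take two tiers: $m_1$ values equal to $1$ and $m_2$ values equal to $1/2$, with $m_1$ uniform on $[k - \sqrt{k}, k]$ and $m_1 + m_2 \ge k + \sqrt{k}$. The algorithm now faces a real decision. Each time a $1/2$ arrives, it must decide whether to take it or save a slot for a future $1$. Taking too many $1/2$'s forces it to reject ones later, a loss of $1/2$ each. Taking too few leaves slots that must be filled with values below the top-$k$, or left empty, again a loss of $1/2$ each relative to the optimum. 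The optimal number of $1/2$'s to take equals $k - m_1$, which is unknown up to $\Theta(\sqrt{k})$. At any fixed time, the information revealed so far, namely the sample counts, determines $m_1$ only to within $\Theta(\sqrt{k})$ when conditioned on the past. An anti-concentration argument on the hypergeometric or binomial counts then shows the expected discrepancy is $\Omega(\sqrt{k})$, and each unit of discrepancy costs $1/2$.

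The main obstacle is making the last step rigorous for adaptive policies, since the algorithm learns $m_1$ progressively. The plan is to look at time $t = 1/2$ and split the regret into two parts. The first part is the decisions about $1/2$'s made before $1/2$: their number, call it $A$, is a function of the history. The second part is the optimal completion afterwards, which, whatever it does, can capture at most $k - A$ further slots. The count of ones arriving after $1/2$ has conditional variance $\Theta(k)$ given the history, because the $\mathrm{Bin}(m_1, 1/2)$ split noise is independent of the pre-$1/2$ observations beyond their sum, and the prior on $m_1$ was wide. Since $A$ must also be large enough, about $k/2$ minus the ones seen, the regret is at least $\tfrac12\,\E\bigl|(\text{ones after } 1/2) - (\text{target})\bigr| = \Omega(\sqrt{k})$. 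The required anti-concentration follows from a local central limit bound on the binomial.
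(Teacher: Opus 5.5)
Your framework (Yao's principle, a hidden count $m_1$ of ones, and a middle tier of $1/2$'s whose correct number $k-m_1$ is uncertain by $\Theta(\sqrt{k})$) is the right one, but there is a genuine gap in the charging step. You charge $1/2$ for every slot the algorithm saves that is not later used by a one, on the grounds that it ``must be filled with values below the top-$k$, or left empty''. That holds only if no further $1/2$'s arrive.

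Your own constraint $m_1+m_2\ge k+\sqrt{k}$, with $m_1$ as small as $k-\sqrt{k}$, forces $m_2\ge 2\sqrt{k}$. (If $m_2$ varied with $m_1$, the count of $1/2$'s seen would itself reveal $m_1$ to precision $O(k^{1/4})$.) So about $m_2/2\ge\sqrt{k}\ge k-m_1$ halves still arrive after time $1/2$. Consider the policy that accepts every one and rejects every $1/2$ before time $1/2$. The loss attributable to its pre-$1/2$ decisions is only $O(k^{1/4})$, so your split at $t=1/2$ cannot certify more than that. The real loss lives in the post-$1/2$ decisions, which you never charge: you bound the ``optimal completion'' only in capacity, not in loss. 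The claim that $A$ ``must be about $k/2$ minus the ones seen'' has no basis, since nothing forces early acceptance of $1/2$'s. The ``target'' in your final expectation is also undefined.

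This is not cosmetic. The posterior uncertainty about $m_1$ shrinks to $O(\sqrt{k(1-t)})$ as $t\to 1$. So if $m_2=\Theta(k)$, accept all ones and defer every $1/2$-decision to the last $\Theta(1/\sqrt{k})$ fraction of time, where $\Theta(\sqrt{k})$ halves still arrive. This has regret $O(k^{1/4})$. Your instance family therefore contains easy members, and any correct argument must use an upper bound on $m_2$.

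The missing idea is scarcity, which makes early mistakes irrecoverable. The paper uses exactly $\sqrt{k}/5$ halves in both scenarios, with $k$ ones (Case 1) versus $k-\sqrt{k}/5$ ones (Case 2), and zeros padding to $2k$. In Case 2 every half is needed, so any half skipped among the first $k$ arrivals is lost for good; in Case 1 any half taken displaces a one. Revealing the first $k$ arrivals at once only helps the algorithm. It reduces every policy to choosing $w(a,b)\le b$ of the $b$ early halves, so the regret is at least $\frac14\sum_{a,b}\min\{P_1(a,b),P_2(a,b)\}\,b$. The paper then shows $P_1\le P_2$ whenever $a\le k/2-\sqrt{k}/10$, and lower-bounds the Case 1 probability of the event $b\ge\sqrt{k}/20$, $a\le k/2-\sqrt{k}/10$ via hypergeometric symmetry and Stirling.

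If you want to keep your continuous prior instead, you would need three things. First, a split time $t$ with $m_2(1-t)\le c\sqrt{k}$ for a small constant $c$. Second, the charge $\frac12\,\mathrm{dist}\bigl(k-m_1,[A,A+H_{\mathrm{late}}]\bigr)$, where $H_{\mathrm{late}}$ is the (known) number of halves still to come. Third, a proof that the posterior of $m_1$ at time $t$ puts constant mass at distance $\Omega(\sqrt{k})$ from every interval of that length. None of this is in the proposal.
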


{
\Cref{thm:multiple-lower} follows from a $(1 - \Omega(k^{-1/2}))$-competitive lower bound claimed in \cite{Kleinberg-SODA05}. However, the proof of such lower bound is not explicitly stated in \cite{Kleinberg-SODA05}. Below we provide a direct proof of \Cref{thm:multiple-lower} for completeness.
}

To prove \cref{thm:multiple-lower}, consider the following random instance with $2k$ arriving values. (Without loss of generality, assume $\sqrt{k}$ is an integer and a multiple of $20$.)
\begin{itemize}
\item Case 1: the input contains $k$ ones, $k - \sqrt{k}/5$ zeros, and $\sqrt{k}/5$ one halves.
\item Case 2: the input contains $k - \sqrt{k}/5$ ones, $k$ zeros, and $\sqrt{k}/5$ one halves.
\item The random instance takes Case 1 as input with probability $0.5$, and takes Case 2 as input with probability $0.5$.
\end{itemize}

It is intuitive that no algorithm can distinguish those two cases very well before incurring a regret of $\Omega(\sqrt{k})$. We will formalize this intuition in the proof below.

\begin{proof}[Proof of \cref{thm:multiple-lower}]
We use Yao's minimax principle to prove the lower bound. Specifically, it is sufficient to show that every deterministic algorithm suffers a regret of at least $\sqrt{k} / 1600$ when the input comes from the random instance above.

Consider the following simplification of the problem: the first half of the input (first $k$ values) is revealed to the algorithm at once. The algorithm then determine the strategy of picking values from this first half of the input. Clearly the optimal deterministic algorithm should pick all arriving ones and reject all arriving zeros. It remains to determine the number of one halves to pick. Therefore, the optimal deterministic algorithm can be defined as a function $w(a, b) \colon [0, k] \times [0, \sqrt{k}/5] \to [0, \sqrt{k}/5]$, i.e., when the first half of the input contains $a$ ones and $b$ one halves, the optimal deterministic algorithm decides to take $w(a, b) \leq b$ one halves.

Define $P_1(a, b)$ to be the probability that the first half of the input contains $a$ ones and $b$ one halves when the input comes from Case 1. Similarly, define $P_2(a, b)$ to be the probability that the first half of the input contains $a$ ones and $b$ one halves when the input comes from Case 1. Note that the optimal strategy in hindsight is to pick no one half when the input comes from Case 1, and all $b$ one halves when the input comes from Case 2. Either incorrectly picking a one half in Case 1 or missing a one half in Case 2 suffers a regret of $\frac{1}{2}$. Therefore, the regret of the optimal deterministic algorithm is at least
\begin{align}
    &\sum_{a = 0}^k \sum_{b = 0}^{\sqrt{k}/5} 0.5 \cdot P_1(a, b) \cdot \frac{1}{2} w(a, b) + \sum_{a = 0}^{k - \sqrt{k}/5} \sum_{b = 0}^{\sqrt{k}/5} 0.5 \cdot P_2(a, b) \cdot \frac{1}{2} (b -w(a, b)) \notag \\
    \geq~& \frac{1}{4} \cdot \sum_{a = 0}^k \sum_{b = 0}^{\sqrt{k}/5} \min \{P_1(a, b), P_2(a, b)\} \cdot \max\{w(a, b), b - w(a, b)\} \notag \\
    \geq~&  \sum_{b = 0}^{\sqrt{k}/5}  \frac{b}{8} \cdot \sum_{a = 0}^k  \min \{P_1(a, b), P_2(a, b)\} \notag \\
    \geq~& \sum_{b = \sqrt{k}/20}^{\sqrt{k}/5}  \frac{\sqrt{k}}{160} \cdot \sum_{a = 0}^{\frac{k}{2} - \frac{\sqrt{k}}{10}}  \min \{P_1(a, b), P_2(a, b)\} \label{eq:kcopies1}
\end{align}
where the second inequality uses the fact that $\max\{w(a, b), b-w(a, b)\} \geq \frac{b}{2}$.

Next, we show that $P_1(a, b) \leq P_2(a, b)$, and further simplify the term of $\min \{P_1(a, b), P_2(a, b)\}$ in \cref{eq:kcopies1} to $P_1(a, b)$. Note that
\[
P_1(a, b) = \frac{{k \choose a} \cdot {\sqrt{k}/5 \choose b} \cdot {k - \sqrt{k}/5 \choose k - a - b}}{{2k \choose k}}, \qquad \text{and} \qquad P_2(a, b) = \frac{{k - \sqrt{k}/5 \choose a} \cdot {\sqrt{k}/5 \choose b} \cdot {k \choose k - a - b}}{{2k \choose k}}.
\]
Therefore,
\[
\frac{P_1(a, b)}{P_2(a, b)} ~=~ \frac{{k \choose a}  \cdot {k - \sqrt{k}/5 \choose k - a - b}}{{k - \sqrt{k}/5 \choose a} \cdot  {k \choose k - a - b}} ~=~ \frac{(a + b)! \cdot (k - \sqrt{k}/5 - a)!}{(k - a)! \cdot (a + b - \sqrt{k}/5)!}~=~ \prod_{u = a + b - \sqrt{k} + 1}^{a + b} \frac{u}{u - 2a - b + k}.
\]
When $a \leq \frac{k}{2} - \frac{\sqrt{k}}{10}$, there must be $2a + b \leq k$, i.e., $\frac{u}{u - 2a - b + k} \leq 1$. Therefore, $P_1(a, b) \leq P_2(a, b)$ when $a \leq \frac{k}{2} - \frac{\sqrt{k}}{10}$.

Now we use the above observation to further simplify \cref{eq:kcopies1}. We have
\begin{align}
    \text{\eqref{eq:kcopies1}} ~&=~ \frac{\sqrt{k}}{160} \cdot \sum_{b = \sqrt{k}/20}^{\sqrt{k}/5}   \sum_{a = 0}^{\frac{k}{2} - \frac{\sqrt{k}}{10}}  \min \{P_1(a, b), P_2(a, b)\} \notag\\
    ~&=~ \frac{\sqrt{k}}{160} \cdot \pr\left[b \geq \sqrt{k}/20 \land a \leq k/2 - \sqrt{k}/10 ~\mid \text{input is Case 1}\right] \notag \\
    ~&\geq~ \frac{\sqrt{k}}{160} \cdot \left(1 - \pr\left[b < \sqrt{k}/20 ~\mid \text{input is Case 1} \right] - \pr\left[a > k/2 - \sqrt{k}/10 ~\mid \text{input is Case 1} \right] \right). \label{eq:kcopiestmp}
\end{align}

Now, we assume the input is Case 1, and upper bound the probability that $b < \sqrt{k}/20$ and $a > k/2 - \sqrt{k}/10$ separately. For simplicity, we omit the condition ``input is Case 1'' in the following calculation.

We first show the $\pr\left[b < \sqrt{k}/20\right] \leq \frac{1}{4}$. Observe that for $u < \sqrt{k}/20$, we have
\begin{align}
    \pr[b = u] ~=~ \frac{{\sqrt{k}/5 \choose u} \cdot {2k - \sqrt{k}/5 \choose k - u}}{{2k \choose k}} ~<~ \frac{{\sqrt{k}/5 \choose u + \sqrt{k}/20} \cdot {2k - \sqrt{k}/5 \choose k - u -  \sqrt{k}/20}}{{2k \choose k}} ~=~ \pr[b = u + \sqrt{k}/20], \label{eq:kcopies2}
\end{align}
where the inequality holds because the combination number ${n \choose m}$ is single-peaked with respect to $m$, and is maximized when $m = n/2$. Then, we have ${\sqrt{k}/5 \choose u} < {\sqrt{k}/5 \choose u + \sqrt{k}/20}$ and ${2k - \sqrt{k}/5 \choose k - u} < {2k - \sqrt{k}/5 \choose k - u -  \sqrt{k}/20}$, because  $u < u + \sqrt{k}/20 \leq \sqrt{k}/10$ and $k - \sqrt{k}/10 < k - u - \sqrt{k}/20 < k - u$.

On the other hand, note that 
\begin{align*}
    \pr[b = u] ~=~ \frac{{\sqrt{k}/5 \choose u} \cdot {2k - \sqrt{k}/5 \choose k - u}}{{2k \choose k}} ~=~ \frac{{\sqrt{k}/5 \choose  \sqrt{k}/5 - u} \cdot {2k - \sqrt{k}/5 \choose k - 3\sqrt{k}/20 + u }}{{2k \choose k}} ~=~ \pr[b = \sqrt{k}/5 - u].
\end{align*}
Combining the above equality with \cref{eq:kcopies2}, we have
\begin{align*}
    \pr\left[b < \sqrt{k}/20\right] ~&=~ \sum_{u = 0}^{\sqrt{k}/20 - 1} \pr[b = u] \\
    ~&\leq~ \frac{1}{2} \cdot \sum_{u = 0}^{\sqrt{k}/20 - 1}  \left(\pr[b = u] + \pr[b = u + \sqrt{k}/20] \right) \\
    ~&=~ \frac{1}{2} \cdot \sum_{u = 0}^{\sqrt{k}/10 - 1} \pr[b = u] \\
    ~&=~ \frac{1}{4} \cdot \sum_{u = 0}^{\sqrt{k}/10 - 1}  \left(\pr[b = u] + \pr[b = \sqrt{k}/5 - u] \right) \\
    ~&=~ \frac{1}{4} \cdot \left(\sum_{u = 0}^{\sqrt{k}/10 - 1} \pr[b = u] + \sum_{u = \sqrt{k}/10 + 1}^{\sqrt{k}/5} \pr[b = u] \right) \\
    ~&=~ \frac{1}{4} \cdot \left(1 - \pr[b = \sqrt{k}/10]\right) ~\leq~ \frac{1}{4}.
\end{align*}

Next, we show $\pr\left[a > k/2 - \sqrt{k}/10 \right] \leq 0.65$, or equivalently, we show $\pr\left[a \leq k/2 - \sqrt{k}/10 \right] \geq 0.35$. 

We first upper- and lower- bound the value of ${2m \choose m}$ for any integer $m$ via applying the following Stirling approximation: for all positive integer $m$, we have
\[
\sqrt{2\pi m} \cdot \left(\frac{m}{e}\right)^m \cdot e^{\frac{1}{12m + 1}} ~<~ m! ~<~ \sqrt{2\pi m} \cdot \left(\frac{m}{e}\right)^m \cdot e^{\frac{1}{12m}}.
\]
Plugging the above inequalities to the equality ${2m \choose m} = \frac{(2m)!}{(m!)^2}$, we have
\[
\frac{4^m}{\sqrt{\pi \cdot m}} \cdot \frac{e^\frac{1}{24m}}{e^\frac{2}{12m + 1}} ~<~ {2m \choose m} ~<~ \frac{4^m}{\sqrt{\pi \cdot m}} \cdot \frac{e^\frac{1}{24m + 1}}{e^\frac{1}{6m}}.
\]
We further simplify the above inequality to
\begin{align}
    0.8 \cdot \frac{4^m}{\sqrt{\pi \cdot m}}  ~<~ {2m \choose m} ~<~ \frac{4^m}{\sqrt{\pi \cdot m}} . \label{eq:kcopies3}
\end{align}

Now, we prove $\pr\left[a \leq k/2 - \sqrt{k}/10 \right] \geq 0.35$ with the help of \cref{eq:kcopies3}. Observe that for $u \in [0, k]$, we have
\[
\pr[a = u] ~=~ \frac{{k \choose u}{k \choose k - u}}{{2k \choose k}} ~=~ \pr[a = k - u].
\]
Applying the above equality for all $u \leq k/2 - \sqrt{k}/10$, we have
\[
\pr\left[a \leq k/2 - \sqrt{k}/10 \right] = \sum_{u = 0}^{k/2 - \sqrt{k}/10} \pr[a = u] = \sum_{u = 0}^{k/2 - \sqrt{k}/10} \pr[a = k - u] = \pr\left[a \geq k/2 + \sqrt{k}/10 \right].
\]
Therefore, to show $\pr\left[a \leq k/2 - \sqrt{k}/10 \right] \geq 0.35$, it is sufficient to show 
\[
\pr\left[ k/2 - \sqrt{k}/10 < a < k/2 + \sqrt{k} / 10 \right] \leq 0.3,
\]
which is true because
\begin{align*}
    \pr\left[ k/2 - \sqrt{k} / 10 < a < k/2 + \sqrt{k} / 10 \right] ~&=~ \sum_{u = k/2 - \sqrt{k} / 10 + 1}^{k/2 + \sqrt{k} / 10 - 1} \pr[a = u] \\
    ~&=~ \sum_{u = k/2 - \sqrt{k} / 10 + 1}^{k/2 + \sqrt{k} / 10 - 1} \frac{{k \choose u} \cdot {k \choose k - u}}{{2k \choose k}} \\
    ~&\leq~ \sum_{u = k/2 - \sqrt{k} / 10 + 1}^{k/2 + \sqrt{k} / 10 - 1} \frac{{k \choose k/2} \cdot {k \choose k/2}}{{2k \choose k}} \\
    ~&<~ \frac{\sqrt{k}}{5} \cdot \frac{{k \choose k/2}^2}{{2k \choose k}} \\
    ~&<~ \frac{\sqrt{k}}{5} \cdot \frac{\frac{4^k}{\pi \cdot k/2}}{0.8 \cdot \frac{4^k}{\sqrt{\pi \cdot k}}}~=~ \frac{\sqrt{k}}{5} \cdot \frac{2.5}{\sqrt{\pi \cdot k}} < 0.3,
\end{align*}
where we use \cref{eq:kcopies3} in the last line. Therefore, we have $\pr\left[a \leq k/2 - \sqrt{k}/10 \right] \geq 0.35$, which implies $\pr\left[a > k/2 - \sqrt{k}/10 \right] \leq 0.65$.

Finally, we apply the inequalities $\pr\left[b < \sqrt{k}/20\right] \leq \frac{1}{4}$ and $\pr\left[a > k/2 - \sqrt{k}/10 \right] \leq 0.65$ to \cref{eq:kcopiestmp}, which gives
\[
\text{\cref{eq:kcopiestmp}} ~\geq~ \frac{\sqrt{k}}{160} \cdot (1 - 0.25 - 0.65) ~=~ \frac{\sqrt{k}}{1600},
\]
i.e., every deterministic algorithm suffers a regret of at least $\sqrt{k} / 1600$ from the hard instance, which finishes the proof of \Cref{thm:multiple-lower}.
\end{proof}

\subsection{Revenue Maximization}

The last model we consider is the \emph{revenue} maximization variant of the online stopping problem, in contrast with welfare maximization in earlier sections. Here, a seller is selling one item to $n$ buyers that arrive sequentially. Buyer $i$ has value $x_i \in [0, 1]$. At the time that buyer $i$ arrives, the seller needs to post a price $p_i$. If $x_i \geq p_i$, buyer $i$ buys the item, and the seller gets $p_i$ as revenue, ending the process. As before, the goal is to minimize the \emph{regret}, defined as the difference between (sometimes the expectation of) the best possible revenue $\max_{i \in [n]} x_i$, and the expected revenue collected by the seller.

There are different modelling choices about values and arrivals. The values can be picked by an adversary, or can be drawn stochastically from same/different known distributions. Similarly, the arrival order can be uniformly random or chosen by an adversary. For the positive result of the problem, we show that there exists an algorithm that achieves $1/e$ regret even in the strongest adversary setting.

\begin{restatable}{theorem}{revupper}
    \label{thm:revenue-upper}
    For revenue maximization variant of online stopping problem, there exists an algorithm that achieves $1/e$ regret when buyers have adversarial values.
\end{restatable}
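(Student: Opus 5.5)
The plan is a \emph{single randomized posted price}. Before any buyer arrives, the seller draws one price $p$ from a fixed distribution $F$ on $[0,1]$, and posts this same $p$ to every buyer, i.e., $p_1 = p_2 = \cdots = p_n = p$. The point is that with a fixed price the outcome does not depend on the arrival order or on the individual values. The item sells, at revenue exactly $p$, if and only if $v \coloneqq \max_i x_i \geq p$. So the expected revenue is $\int_0^{v} p \d F(p)$, and the regret on an instance with maximum $v$ is
\[
v - \int_0^{v} p \d F(p).
\]
This reduces the problem, for adversarial values and even adversarial order, to choosing $F$ so that this one-variable expression is at most $1/e$ for every $v \in [0,1]$.

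I will take $F$ to be the ``equal-revenue'' distribution used in ski-rental-type arguments: $F$ has density $1/p$ on $[1/e, 1]$ and no mass elsewhere. It is a probability distribution because $\int_{1/e}^{1} \frac{1}{p} \d p = \ln 1 - \ln(1/e) = 1$. The density $1/p$ is chosen so that $p \d F(p) = \d p$, which makes the expected revenue grow at rate exactly $1$ in $v$. Then there are two cases:
\begin{itemize}
    \item If $v < 1/e$, nothing sells, so the regret is $v < 1/e$.
    \item If $v \geq 1/e$, the expected revenue is $\int_{1/e}^{v} p \cdot \frac{1}{p} \d p = v - 1/e$, so the regret is exactly $1/e$.
\end{itemize}
Taking the supremum over instances gives worst-case regret $1/e$. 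The guarantee is pointwise for every value vector, so it also covers stochastic values and random order.

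There is no real obstacle in this argument. The only step needing care is justifying that revenue equals $p \cdot \one[v \geq p]$ regardless of order. This holds because every buyer faces the same price, so some buyer buys if and only if the highest-value buyer would, and the first buyer with $x_i \geq p$ pays exactly $p$. Ties at $v = p$ have probability zero under the continuous $F$.
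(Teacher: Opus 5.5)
Your proposal is correct and is essentially the paper's proof. The paper's randomized uniform price $p = e^{\alpha}$ with $\alpha \sim U[-1,0]$ has exactly your density $1/p$ on $[1/e,1]$. It then uses the same two-case computation: regret at most $v \le 1/e$ when nothing sells, and expected revenue $v - 1/e$ otherwise, giving regret exactly $1/e$.
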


To prove \Cref{thm:revenue-upper}, we present an algorithm -- randomized uniform price (\cref{alg:rup}) -- that can achieve optimal regret of $1/e$ in all of these revenue maximization settings. This algorithm was first discussed in the unpublished work of \cite{KS21} for regret minimization for \emph{welfare} with adversarial arrival order. We further observe that the same algorithm also gives the same regret guarantee for the revenue maximization setting.

\begin{algorithm}[tbh]
\caption{Randomized Uniform Price}
\label{alg:rup}
Let $\alpha \sim U[-1, 0]$, and set uniform price $p \gets e^\alpha$ \\
\For{$\tau = 0 \to 1$ where $x_i$ arrives at $t_i = \tau$}
{
\uIf{$x_i \geq p$}
{
    Pick $x_i$ and \textbf{exit}
}
\Else
{
    Skip $x_i$
}
}
\end{algorithm}

Now, we show that \Cref{alg:rup} is the desired algorithm for \Cref{thm:revenue-upper}.
\begin{proof}[Proof of \Cref{thm:revenue-upper}]
Let $z := \max \{x_1, \ldots, x_n\}$ be the optimal revenue in hindsight. There are two cases:
\begin{itemize}
\item Case (1): $z \leq 1/e$, and the regret is at most $z \leq 1/e$.
\item Case (2): $z > 1/e$. In this case, the item is sold if and only if $z \geq p$, which is equivalent to $\alpha \leq \ln z$. The seller gets revenue of $p = e^\alpha$ if the item is sold. Therefore, the expected revenue of \cref{alg:rup} is
\[
\int_{-1}^{\ln z} e^\alpha \d \alpha = z - \frac{1}{e}.
\]
\end{itemize}
Therefore, the regret is $z - (z - 1/e) = 1/e$.
\end{proof}

For the negative result of the problem, we provide a matching $1/e$ regret lower bound. We also show the hard instance only contains a single stochastic buyer. Therefore the extra stochastic or random order assumption does not make the problem easier.

\begin{restatable}{theorem}{revlower}
    \label{thm:revenue-lower}
    For revenue maximization variant of online stopping problem, there exists an instance in which no algorithm can achieve a regret strictly better than $1/e$ when buyers have adversarial values.
\end{restatable}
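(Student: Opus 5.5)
We would prove \Cref{thm:revenue-lower} with a single buyer whose value is drawn from the equal-revenue distribution truncated at $1/e$, namely
\[
\Pr[x_1 \geq v] = \frac{1}{e v} \quad \text{for } v \in [1/e, 1],
\]
together with an atom of mass $1/e$ at $v=1$ so that the distribution lives in $[1/e,1]$. (Equivalently, $x_1 = e^{\alpha}$ where $\alpha$ has a density on $[-1,0)$ and an atom at $0$.) With one buyer, the arrival order and the distinction between adversarial, stochastic and i.i.d.\ values are irrelevant. By Yao's principle this stochastic instance lower-bounds every adversarial-value setting: if some algorithm had worst-case regret below $1/e$ on every deterministic value, averaging over $x_1$ would give expected regret below $1/e$ on this instance.

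The proof has three steps.

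\textbf{Step 1: reduce to a single fixed price.} The seller's only decision is a price $p$, possibly randomized, posted to the single buyer. Expected revenue is linear in the distribution over $p$, so it suffices to bound the revenue of each deterministic $p$.

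\textbf{Step 2: compute the revenue of each price.} For $p \in [1/e, 1]$, the revenue is $p \cdot \Pr[x_1 \geq p] = p \cdot \frac{1}{ep} = 1/e$. For $p < 1/e$, the buyer always buys, so the revenue is $p < 1/e$. For $p > 1$, the revenue is $0$. Hence every algorithm earns expected revenue at most $1/e$.

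\textbf{Step 3: compute the hindsight benchmark.} We need
\[
\E[x_1] = \int_0^1 \Pr[x_1 \geq v]\d v = \frac{1}{e} + \int_{1/e}^1 \frac{1}{e v}\d v = \frac{1}{e} + \frac{1}{e}\cdot 1 = \frac{2}{e}.
\]
The regret is therefore at least $2/e - 1/e = 1/e$ for every algorithm, which gives the theorem. It is tight against \Cref{thm:revenue-upper}.

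The main obstacle is the choice of distribution; once it is fixed, the argument is routine. The naive equal-revenue distribution on $[1/e,1]$ without the atom at $1$ fails, because it is not a probability distribution: its total mass is $1$ only when the atom at $1$ is included, and keeping its support at $1/e$ is exactly what gives $\E[x_1] = 2/e$. We would double-check that $\Pr[x_1 \geq 1] = 1/e$ is consistent with the formula $\Pr[x_1 \geq v] = 1/(ev)$ at $v=1$, so that the tail function is valid: it equals $1$ at $v=1/e$, is non-increasing, and leaves mass $1/e$ at $v=1$.
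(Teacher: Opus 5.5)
Your proof is correct and takes essentially the same route as the paper: a single buyer with the equal-revenue distribution on $[1/e,1]$ plus an atom of mass $1/e$ at $1$, every price earning at most $1/e$, and $\E[x_1] = 2/e$, so the regret is at least $1/e$. You also make explicit a step the paper leaves implicit: this stochastic instance lower-bounds the adversarial-value setting, because an adversarial-value algorithm is a valid algorithm here and its worst-case regret dominates its average regret.
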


\begin{proof}
    Consider a buyer with valuation $x \sim \D$, where the cumulative distribution function of the distribution $\D$ is 
    \begin{align*}
        \label{eq:cond}
        F_{\D}(x) = \left\{ 
        \begin{aligned}
            & 0 &\qquad  & x \in \Big[0, \frac{1}{e}\Big)\\
            & 1 - \frac{1}{ex} &\qquad  & x \in \Big[\frac{1}{e}, 1\Big) \\
            & 1 & \qquad & x = 1
        \end{aligned}
        \right.
    \end{align*}
    Observe that an optimal strategy for the seller is to use any price in $[1/e, 1)$ and get expected revenue of
    \[
    p \cdot (1 - F_{\D}(p)) ~=~ \frac{1}{e}.
    \]
    On the other hand,
    \[
    \mathbb{E}[x] ~=~ \int_0^1 (1 - F(x)) \d x ~=~ \frac{2}{e}.
    \]
    By linearity of expectation, the optimal regret for the seller is $2/e-1/e=1/e$.
\end{proof}

\section{Omitted Proofs in Section 4}
\subsection{Proof of \Cref{thm:boupper}}
\label{app:main}
\mainupper*

\begin{proof}
    Recall that 
    \begin{align*}     
        \rstar_5(\vec x) ~:=~&  x_1 \cdot \theta_1 + \left(\sum_{i = 2}^{4} (x_1 - x_i) \cdot \frac{1}{i} \cdot (1-\theta_i)^i\right) + \left(\sum_{i = 5}^{\infty} (x_1 - x_5) \cdot \frac{1}{i} \cdot (1-\theta_5)^i\right) \\ 
        & - \left(\sum_{i = 2}^{4} (x_1 - x_i) \cdot \sum_{k = i+1}^{4} \left(\frac{1}{k-1}  - \frac{1}{k}\right) \cdot (1-\theta_k)^k\right)\\
        & - \left(\sum_{i = 2}^{4} (x_1 - x_i) \cdot \sum_{k = 5}^{\infty} \left(\frac{1}{k-1}  - \frac{1}{k}\right) \cdot (1-\theta_5)^k\right).
    \end{align*}
    By applying Taylor series of 
    \begin{itemize}
        \item $\ln(1 - z) = -\sum_{k = 1}^{\infty} \frac{z^k}{k}$, for $z \in [0, 1)$, and
        \item $(1 - z)\ln(1 - z) = -z + \sum_{k = 2}^{\infty} \frac{z^k}{k(k-1)}$, for $z \in [0, 1)$,
\end{itemize}
for $z = 1 - \theta_4$, we have
\begin{align*}     
        \rstar_5(\vec x) ~=~&  x_1 \cdot \theta_1 + \left(\sum_{i = 2}^{4} (x_1 - x_i) \cdot \frac{1}{i(i-1)} \cdot (1-\theta_i)^i\right) \\
        &+ (x_1 - x_5) \left(-\ln(\theta_5) - (1-\theta_5) - \frac{(1-\theta_5)^2}{2} - \frac{(1-\theta_5)^3}{3} - \frac{(1-\theta_5)^4}{4}\right) \\ 
        & - \left(\sum_{i = 2}^{4} (x_1 - x_i)\right)\left( \theta_5 \ln(\theta_5) + (1-\theta_5) - \frac{(1-\theta_5)^2}{2} - \frac{(1-\theta_5)^3}{6} - \frac{(1-\theta_5)^4}{12} \right).
\end{align*}
It can be shown via computer-aided search that the maximum of above function is at most $0.190$, which proves \Cref{thm:boupper}.
\end{proof}

\noindent \textbf{Finding the maximum of $\rstar_5(\vec x)$ via heuristic grid search.} We briefly discuss the search algorithm we use for finding the maximum of $\rstar_5(\vec x)$. The goal of \Cref{thm:boupper} is to find a near-optimal function $f$ such that the value of $\max_{\vec x} \rstar_5(\vec x)$ is approximately maximized. To resolve this optimization problem, we use a two-layer heuristic search. The outer layer searches over a certain class of parameterized functions (e.g., function class $f(t) = \exp(-t/c)$, function class $f(t) = (1 - x)^c$, and $10$-folded piecewise linear functions). Once the curve is fixed, we use a second heuristic search in the inner layer to find the optimal parameters in $\vec x$ that maximize $\rstar_5(\vec x)$.

After completing the above experiment, we find that $f(t) = \exp(-t/c)$ with $c = 0.472$ is the near-optimal curve among all the classes of curves we have tried, with an approximate regret upper bound of $0.188$, given by the inner-layer heuristic search algorithm. However, this regret cannot be directly used as a feasible regret upper bound. To verify the true regret upper bound of the curve $f(t) = \exp(-t/0.472)$, we calculate its Lipschitz coefficient, which leads to the following \Cref{lem:lip}:

\begin{restatable}{lemma}{lemlip}
\label{lem:lip}
For two vectors $\vec x = (x_1, x_2, x_3, x_4, x_5)$ and $\vec y = (y_1, y_2, y_3, y_4, y_5)$, if we have $x_1 \geq y_1$, $x_2 \leq y_2$, $x_3 \leq y_3$, $x_4 \leq y_4$, $x_5 \leq y_5$, then 
\begin{align*}
\rstar_5(\vec y) - \rstar_5(\vec x) ~\leq~ &0.472(x_1 - y_1) + 0.444(y_2 - x_2)\\& \qquad + 0.195(y_3 - x_3)  + 0.128(y_4 - x_4) + 0.294(y_5 - x_5).
\end{align*}
\end{restatable}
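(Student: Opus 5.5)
The plan is to bound the difference by a line integral of the gradient along a monotone path from $\vec x$ to $\vec y$, and then bound each partial derivative uniformly by its worst case. Since each coordinate moves monotonically (up in $x_1$, down in $x_2,\dots,x_5$ when going from $\vec y$ to $\vec x$), the fundamental theorem of calculus gives
\[
\rstar_5(\vec y)-\rstar_5(\vec x) ~=~ \sum_{k=1}^5 \int_{x_k}^{y_k} \frac{\partial \rstar_5}{\partial x_k}(\vec z^{(k)}(s))\d s ,
\]
where $\vec z^{(k)}$ is the path that changes one coordinate at a time. We then have
\[
\rstar_5(\vec y)-\rstar_5(\vec x)\le \Bigl(\sup \bigl(-\tfrac{\partial \rstar_5}{\partial x_1}\bigr)\Bigr)(x_1-y_1)+\sum_{k=2}^5 \Bigl(\sup \tfrac{\partial \rstar_5}{\partial x_k}\Bigr)(y_k-x_k).
\]
The suprema are taken over the feasible region $1\ge x_1\ge\cdots\ge x_5\ge f(1)$. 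So it suffices to show five one-sided derivative bounds: $-\partial_1 \le 0.472$ and $\partial_k \le c_k$ with $(c_2,\dots,c_5)=(0.444,0.195,0.128,0.294)$. Only the positive parts matter.

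For $x_1$, I will use the closed form of $\rstar_5$ from the proof of \Cref{thm:boupper}, with $\theta=g(x)=-0.472\ln x$. We get $\partial_1 = \theta_1 + x_1 g'(x_1) + B$, where $x_1 g'(x_1) = -0.472$ and $B$ is the sum of coefficients multiplying $x_1$. Then $B = \sum_{i=2}^4 \tfrac{(1-\theta_i)^i}{i(i-1)} + L(\theta_5) - 3H(\theta_5)$, with
\[
L(\theta)=-\ln\theta-\sum_{j\le4}\tfrac{(1-\theta)^j}{j}, \qquad H(\theta)=\theta\ln\theta+(1-\theta)-\tfrac{(1-\theta)^2}{2}-\tfrac{(1-\theta)^3}{6}-\tfrac{(1-\theta)^4}{12}.
\]
Both $L$ and $H$ are tails of nonnegative series, so $L,H\ge0$. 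The goal is then $-\partial_1\le 0.472$, i.e.\ $\theta_1+B\ge0$. I would check this by noting that $L-3H$ equals $\sum_{k\ge5}\bigl(\tfrac1k-3(\tfrac{1}{k-1}-\tfrac1k)\bigr)(1-\theta_5)^k$. The negative early terms need to be dominated, and I would verify this numerically in the one variable $\theta_5$ and combine it with the nonnegative $i=2,3,4$ terms. This is the step where I am least sure of the sign, so if it fails I would fall back on a direct one-dimensional check.

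For $k=2,3,4$, differentiating gives
\[
\partial_k=-\frac{(1-\theta_k)^k}{k(k-1)}-(x_1-x_k)\frac{(1-\theta_k)^{k-1}}{k-1}g'(x_k)+H(\theta_5).
\]
Here $-g'(x)=0.472/x$. Each term depends only on $(x_1,x_k,x_5)$, so I bound it by maximizing over these with $x_1\le1$ and $x_k\ge x_5$. The product $(1-x)\cdot 0.472(1-\theta)^{k-1}/x$ is a one-variable maximization, and $\sup H$ is attained as $\theta_5\to$ its extreme $0.472$ (with $x_5\ge f(1)$). For $k=5$, the same computation uses $\partial_5=-(x_1-x_5)L'(\theta_5)g'(x_5)-L(\theta_5)$ together with the $-3H'(\theta_5)g'(x_5)$ contribution times $\sum(x_1-x_i)$, the latter bounded using $x_1-x_i\le x_1-x_5$.

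The main obstacle is making these suprema rigorous rather than heuristic. Each reduces to maximizing an explicit smooth function of at most two or three variables on a compact box. I would do this with interval arithmetic or a fine grid plus second-derivative bounds, and I expect the stated constants to carry a little slack above the true maxima.
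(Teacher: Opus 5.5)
Your architecture is the paper's: integrate the gradient along a coordinatewise path, then bound $-\partial_1\rstar_5$ and $\partial_k\rstar_5$ uniformly. The $x_1$ step is fine, and needs no numerics. Since $\frac1k-3\bigl(\frac1{k-1}-\frac1k\bigr)=\frac{k-4}{k(k-1)}\ge 0$ for every $k\ge5$, $L-3H\ge0$ termwise, so $\partial_1\rstar_5\ge\theta_1-0.472\ge-0.472$.

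The genuine gap is at $k=2,3,4$. The closed form you import from the proof of \Cref{thm:boupper} is not an identity for $\rstar_5$ as defined. In the double sum $\sum_{i=2}^{4}(x_1-x_i)\sum_{k=i+1}^{4}\frac{1}{k(k-1)}(1-\theta_k)^k$ it effectively replaces $x_1-x_i$ by $x_1-x_k$, which is valid only when $x_2=x_3=x_4$. Differentiating the actual definition, with $a_i=x_1-x_i$, $T_k=(1-\theta_k)^k$ and $-g'(x)=c/x$, gives
\begin{align*}
\partial_2\rstar_5 &= -\tfrac12T_2+a_2(1-\theta_2)\tfrac{c}{x_2}+\tfrac16T_3+\tfrac1{12}T_4+H(\theta_5),\\
\partial_3\rstar_5 &= -\tfrac13T_3+\bigl(a_3-\tfrac12a_2\bigr)(1-\theta_3)^2\tfrac{c}{x_3}+\tfrac1{12}T_4+H(\theta_5),\\
\partial_4\rstar_5 &= -\tfrac14T_4+\bigl(a_4-\tfrac13(a_2+a_3)\bigr)(1-\theta_4)^3\tfrac{c}{x_4}+H(\theta_5).
\end{align*}
Your formulas drop positive terms and have wrong coefficients (e.g.\ $\tfrac12a_3$ instead of $a_3$ in $\partial_3$). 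Each true partial also depends on more than $(x_1,x_k,x_5)$. So certifying suprema of your expressions does not certify $0.444$, $0.195$, $0.128$.

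The missing step is handling this coupling, which is what the paper does after differentiating the original series definition:
\begin{itemize}
\item Because $g$ is decreasing, $\theta_2\le\theta_3\le\theta_4\le\theta_5$. Hence each $T_j$ with $j>k$ is at most $(1-\theta_k)^j$, and $H(\theta_5)\le\frac14(1-\theta_5)^5\le\frac14(1-\theta_k)^5$.
\item The pieces $-\tfrac12a_2(\cdot)$ and $-\tfrac13(a_2+a_3)(\cdot)$ are nonpositive and can be dropped.
\item Finally, $a_k\le1-x_k$.
\end{itemize}
This leaves a one-variable maximization in $x_k$. Relatedly, $H$ is decreasing with $H(0)=\frac14$, so bounding it by its global supremum would overshoot all three constants; it must be tied to $\theta_k$. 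Also, $\theta_5$ ranges over $[0,1]$, and $0.472$ is not an endpoint.

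For $k=5$ your formula has a sign error. The correct derivative is $\partial_5\rstar_5=-L(\theta_5)+(x_1-x_5)L'(\theta_5)g'(x_5)-(a_2+a_3+a_4)H'(\theta_5)g'(x_5)$. Here $L'g'=\frac{(1-\theta_5)^4}{\theta_5}\cdot\frac{c}{x_5}>0$, and this positive term is what produces the $0.294$. Since $H'<0$ and $g'<0$, the last term is nonpositive and should simply be dropped. Replacing $x_1-x_i$ by the larger $x_1-x_5$ makes a nonpositive term more negative, so it goes the wrong way for an upper bound; and there is no factor $3$.

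Finally, your derivative bounds use $x_1\ge\cdots\ge x_5$, so you should specify the path order. Move $x_5,x_4,x_3,x_2$ down in that order, then $x_1$ up; this keeps every intermediate point sorted whenever both endpoints are.
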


We defer the proof of \Cref{lem:lip} to \Cref{sec:lip}, and first show how we apply \Cref{lem:lip} to our search algorithm. Our search algorithm is a recursive algorithm. We first initialize the search range $[l_i, r_i] = [f(1), 1]$ for $i \in [5]$, indicating that the search range is a 5-dimension hypercube that contains all vectors satisfying $l_i \leq x_i \leq r_i$ for $i \in [5]$. When the recursive algorithm takes a 5-dimensional hypercube as the input, it first check the value of $\rstar_5(r_1, l_2, l_3, l_4, l_5)$. Then, \Cref{lem:lip} gives an upper bound on all the remaining vectors inside the current search range. If the upper bound is smaller than the target bound $0.190$, the recursive algorithm terminates, as it suggests that the function value of all possible vectors in the hypercube is bounded by $0.190$; otherwise, the algorithm divides each $[l_i, r_i]$ into $[l_i, (l_i + r_i)/2]$ and $[(l_i + r_i)/2, r_i]$, and all possible combinations form $32$ small hypercubes. Then, the algorithm searches each of the 32 small hypercubes recursively.

Note that the search algorithm must terminate if the maximum of $\rstar_5(\vec x)$ is smaller than $0.190$. As the search algorithm we run terminates, it implies the maximum of $\rstar_5(\vec x)$ is smaller than $0.190$.

\subsection{proof of \Cref{lem:lip}}
\label{sec:lip}

For simplicity of the analysis, we use the original version of 
\begin{align*}     
        \rstar_5(\vec x) ~=~&  x_1 \cdot \theta_1 + \left(\sum_{i = 2}^{4} (x_1 - x_i) \cdot \frac{1}{i} \cdot (1-\theta_i)^i\right) + \left(\sum_{i = 5}^{\infty} (x_1 - x_5) \cdot \frac{1}{i} \cdot (1-\theta_5)^i\right) \\ 
        & - \left(\sum_{i = 2}^{4} (x_1 - x_i) \cdot \sum_{k = i+1}^{4} \left(\frac{1}{k-1}  - \frac{1}{k}\right) \cdot (1-\theta_k)^k\right)\\
        & - \left(\sum_{i = 2}^{4} (x_1 - x_i) \cdot \sum_{k = 5}^{\infty} \left(\frac{1}{k-1}  - \frac{1}{k}\right) \cdot (1-\theta_5)^k\right).
    \end{align*}
    to prove \Cref{lem:lip}. Note that to prove \Cref{lem:lip}, it's sufficient to prove that $\frac{\partial \rstar_5(\vec x)}{\partial x_1} \geq -0.472$, $\frac{\partial \rstar_5(\vec x)}{\partial x_i}\leq 0.444,0.195,0.128,0.294$ for $i = 2, 3, 4, 5$, respectively. Then, a multiple integral is sufficient to prove \Cref{lem:lip}.

\noindent \textbf{Bounding $\frac{\partial \rstar_5(\vec x)}{\partial x_1}$.} We have \begin{align*}
        \frac{\partial \rstar_5(\vec x)}{\partial x_1} ~=&~ -c +\theta_1 +  \sum_{i = 2}^{4}  \frac{1}{i} \cdot (1-\theta_i)^i + \sum_{i = 5}^{\infty} \frac{1}{i} \cdot (1-\theta_5)^i \\ 
        & - \sum_{k = 2}^{4} (k-2)\left(\frac{1}{k-1}  - \frac{1}{k}\right) \cdot (1-\theta_k)^k\\
        & - 3\sum_{k = 5}^{\infty} \left(\frac{1}{k-1}  - \frac{1}{k}\right) \cdot (1-\theta_5)^k\\
        =&~ -c +\theta_1 +  \sum_{i = 2}^{4}  \frac{1}{i(i-1)} \cdot (1-\theta_i)^i + \sum_{i = 5}^{\infty} \frac{i-4}{i(i-1)} \cdot (1-\theta_5)^i \\ 
        \geq& -c ~=~ -0.472.
    \end{align*}

\noindent \textbf{Bounding $\frac{\partial \rstar_5(\vec x)}{\partial x_2}$.} We have
\begin{align*}
        \frac{\partial \rstar_5(\vec x)}{\partial x_2} ~=&~  -\frac{1}{2} \cdot (1-\theta_2)^2 - (x_1-x_2)(1-\theta_2)\theta'_2 \\
        &+\sum_{k = 3}^{4} \left(\frac{1}{k-1}  - \frac{1}{k}\right) \cdot (1-\theta_k)^k+\sum_{k = 5}^{\infty} \left(\frac{1}{k-1}  - \frac{1}{k}\right) \cdot (1-\theta_5)^k \\
        \leq & -\frac{1}{2} \cdot (1-\theta_2)^2 +(1-x_2)(1-\theta_2)\frac{c}{x_2}\\
        &+ \frac{1}{6} \cdot (1-\theta_2)^3 + \frac{1}{12} \cdot (1-\theta_2)^4 + \frac{1}{4} \cdot (1-\theta_2)^5 ~\leq~ 0.444,
\end{align*}
where the first inequality is from the observation that 
\begin{align}
\label{eq:key-obs}
        \sum_{k = 5}^{\infty} \left(\frac{1}{k-1}  - \frac{1}{k}\right) \cdot (1-\theta_5)^k \leq \sum_{k = 5}^{\infty} \left(\frac{1}{k-1}  - \frac{1}{k}\right) \cdot (1-\theta_5)^5 = \frac14 (1-\theta_5)^5,
\end{align}
and the last inequality can be determined via solving a single-parameter optimization problem.
    
\noindent \textbf{Bounding $\frac{\partial \rstar_5(\vec x)}{\partial x_3}$.} We have
    \begin{align*}
        \frac{\partial \rstar_5(\vec x)}{\partial x_3} ~=&~ - \frac{1}{3} \cdot (1-\theta_3)^3  - (x_1-x_3)(1-\theta_3)^2\theta'_3 + \frac{1}{12} \cdot (1-\theta_4)^4\\
        &+ \left((x_1 - x_2) \cdot\frac12 \cdot (1-\theta_3)^2 \cdot \theta'_3 \right) + \sum_{k = 5}^{\infty} \left(\frac{1}{k-1}  - \frac{1}{k}\right) \cdot (1-\theta_5)^k\\
        ~\leq&~ - \frac{1}{3} \cdot (1-\theta_3)^3 + (1-x_3)(1-\theta_3)^2\frac{c}{x_3} + \frac{1}{12} \cdot (1-\theta_3)^4 + \frac14 \cdot (1-\theta_3)^5\\
        ~\leq&~ 0.195,
    \end{align*}
    where the first inequality follows from \Cref{eq:key-obs} together with the fact that $(x_1 - x_2) \cdot\frac12 \cdot (1-\theta_3)^2 \cdot \theta'_3 \leq 0$, and the last inequality can be given via solving a single-parameter optimization problem.

\noindent \textbf{Bounding $\frac{\partial \rstar_5(\vec x)}{\partial x_4}$.} We have
    \begin{align*}
        \frac{\partial \rstar_5(\vec x)}{\partial x_4} ~=&~ -\frac{1}{4} \cdot (1-\theta_4)^4 - (x_1-x_4)(1-\theta_4)^3\theta'_4\\
        &+ \left( (2x_1 - x_2 - x_3) \cdot \frac1{3} \cdot (1-\theta_4)^3 \theta_4'\right)+\sum_{k = 5}^{\infty} \left(\frac{1}{k-1}  - \frac{1}{k}\right) \cdot (1-\theta_5)^k\\
        ~\leq&~ -\frac{1}{4} \cdot (1-\theta_4)^4 + (1-x_4)(1-\theta_4)^3\frac{c}{x_4}
        +\frac14 (1-\theta_4)^5\\
        ~\leq&~0.128,
    \end{align*}
    where the first inequality follows from \Cref{eq:key-obs} together with the fact that $ (2x_1 - x_2 - x_3) \cdot \frac1{3} \cdot (1-\theta_4)^3 \theta_4' \leq 0$, and  the last inequality can be given via solving a single-parameter optimization problem.

\noindent \textbf{Bounding $\frac{\partial \rstar_5(\vec x)}{\partial x_4}$.} We have
    \begin{align*}
        \frac{\partial \rstar_5(\vec x)}{\partial x_5} ~=&~ -\sum_{i = 5}^{\infty}  \frac{1}{i} \cdot (1-\theta_5)^i - \sum_{i = 5}^{\infty} (x_1 - x_5)  \cdot (1-\theta_5)^{i-1} \cdot \theta'_5 \\ 
        & + \left((3x_1 - x_2 - x_3 - x_4) \cdot \sum_{k = 5}^{\infty} \frac{1}{k-1}  \cdot (1-\theta_5)^{k-1}\cdot \theta'_5\right)\\
        ~\leq&~ -\sum_{i = 5}^{\infty} \frac{1}{i} \cdot (1-\theta_5)^i + \sum_{i = 5}^{\infty} (1 - x_5) \cdot (1-\theta_5)^{i-1} \cdot \frac{c}{x_5} \\
        =&\left(\ln(\theta_5) + (1-\theta_5) + \frac{(1-\theta_5)^2}{2} + \frac{(1-\theta_5)^3}{3} - \frac{(1-\theta_5)^4}{4}\right) \\
        &+ (1 - x_5) \cdot (1-\theta_5)^4 \cdot \frac{c}{x_5} \frac1{\theta_5} \\
        ~\leq &~ 0.294,
    \end{align*}
    where the first inequality follows from \Cref{eq:key-obs} together with the fact that $(3x_1 - x_2 - x_3 - x_4) \cdot \sum_{k = 5}^{\infty} \frac{1}{k-1}  \cdot (1-\theta_5)^{k-1}\cdot \theta'_5 \leq 0$, and  the last inequality can be given via solving a single-parameter optimization problem.

\subsection{Proof of \Cref{thm:bolower}}

\bopclower*

To prove \Cref{thm:bolower}, we show the following \Cref{ex:lower_bopc} is the hard instance for \Cref{alg:bopc}.

\begin{example}
\label{ex:lower_bopc} 
Let $a = 1$, $b = 0.61$, $c = 0.48$, and $d = 0.44$. 
We consider the following group of instances:
\begin{itemize}
    \item For instance $\I_1$, the input is $\vec x_1 = (a, b, c, d)$.
    \item For instance $\I_2$, the input is $\vec x_2 = (b, c, d)$.
    \item For instance $\I_3$, the input is $\vec x_3 = (c, d)$.
    \item For instance $\I_4$, the input is $\vec x_4 = (d)$.
\end{itemize}
\end{example}

\begin{proof}[Proof of \Cref{thm:bolower}]

We show that for every best-only pricing curve, its worst performance among the four instances $\I_1, \I_2, \I_3, \I_4$ introduced in \Cref{ex:lower_bopc} is at least $0.171$.

Similar to our derivation of the upper bound, we again start at the following expression of \cref{lem:bopc_eq}.
\begin{align}
\label{eq:bopc_lower}
    \reg(\vec x) ~=~& x_1 \cdot \theta_1 + \left(\sum_{i = 2}^{n} (x_1 - x_i) \cdot \frac{1}{i} \cdot (1-\theta_i)^i\right) \nonumber \\
    &- \left(\sum_{i = 2}^{n} (x_1 - x_i) \cdot \sum_{k = i+1}^{n} \left(\frac{1}{k-1} - \frac{1}{k}\right) \cdot (1-\theta_k)^k\right).
\end{align}

Now we show that the optimal pricing curve for \cref{ex:lower_bopc} achieves a regret at least $0.171$. The key observation is that given a pricing curve $f(t)$ with $g(x)$ being the ``inverse function'' of $f(t)$, we only need the values of parameters $\theta_a := g(a), \theta_b := g(b), \theta_c := g(c)$, and $\theta_d := g(d)$ to compute the expected regret $\mathbb{E}_{\vec x \sim \D}[\reg(\vec x)]$. Therefore, instead of searching the optimal pricing curve, it is sufficient to search the optimal parameters $\theta^*_a, \theta^*_b, \theta^*_c, \theta^*_d$ that minimizes the expected regret of \cref{eq:bopc_lower}. The numerical experiment shows that the optimal parameters are $\theta^*_a = 0$, $\theta^*_b \approx 0.21728$, $\theta^*_c \approx 0.33677$, $\theta^*_d \approx 0.38633$. The worst regret with these optimal parameters is at least $0.1712 > 0.171$, which implies that no best-only pricing curve has regret less than $0.171$.
\end{proof}

\end{document}